\documentclass{article}
\usepackage[T1]{fontenc}
\usepackage{amsmath}
\usepackage{amsthm}
\usepackage{amssymb}
\usepackage{graphicx}
\usepackage{geometry}
\makeatletter

\usepackage{stmaryrd}
\usepackage{cases}

\makeatother

\providecommand\theoremname{Theorem}
\theoremstyle{plain}
\newtheorem{thm}{\protect\theoremname}
\providecommand\lemmaname{Lemma}
\newtheorem{lem}[thm]{\protect\lemmaname}

\begin{document}
\title{An attainable Gill--Massar-type bound for spin-factor models}
\author{
	Koichi Yamagata%
	\thanks{yamagata@se.kanazawa-u.ac.jp}\\
	{Institute of Science and Engineering, Kanazawa University} \\
	{Kanazawa, Ishikawa, 920-1192, Japan}
}%
\maketitle

\global\long\def\A{\mathcal{A}}%
\global\long\def\E{\mathcal{E}}%
\global\long\def\S{\mathcal{S}}%
\global\long\def\R{\mathbb{R}}%
\global\long\def\C{\mathbb{C}}%
\global\long\def\N{\mathbb{N}}%
\global\long\def\D{\mathcal{D}}%
\global\long\def\M{\mathcal{M}}%
\global\long\def\X{\mathcal{X}}%
\global\long\def\F{\mathcal{F}}%
\global\long\def\B{\mathcal{B}}%
\global\long\def\T{\mathcal{T}}%
\global\long\def\P{\mathcal{P}}%
\global\long\def\H{\mathcal{H}}%
\global\long\def\Y{\mathcal{Y}}%
\global\long\def\bra#1{\left\langle #1\right|}%
\global\long\def\ket#1{\left|#1\right\rangle }%
\global\long\def\tr{{\rm tr}\,}%
\global\long\def\Tr{{\rm Tr}\,}%
\global\long\def\braket#1#2{\left\langle #1\mid#2\right\rangle }%
\global\long\def\V{\mathcal{V}}%
\global\long\def\re{{\rm Re}\,}%
\global\long\def\im{{\rm Im}\,}%
\global\long\def\ii{\sqrt{-1}}%
\global\long\def\Span{{\rm span}}%

\begin{abstract}
We determine the exact local precision limits for single-copy estimation
of smooth multiparameter quantum statistical models contained in spin
factors, a class of matrix Jordan algebras whose state spaces generalize
the qubit Bloch ball. At any parameter point where the symmetric logarithmic
derivative (SLD) Fisher information is positive definite, we characterize
the entire attainable classical Fisher-information region over all
finite-outcome positive-operator-valued measurements. After SLD normalization,
this region consists exactly of the real symmetric positive semidefinite
matrices with trace at most one, independently of the ambient Hilbert-space
dimension. This yields a sharp weighted covariance bound for every
positive definite weight, attained by an explicit locally unbiased
estimator based on randomized spectral measurements of SLD directions.
The proof combines a statistics-preserving positive projection onto
the spin factor with the two-eigenvalue structure of its effects,
revealing the Jordan-algebraic origin of the tradeoff. The result
extends the qubit information tradeoff to models with more than three
parameters. Since the optimal measurement depends on the unknown parameter,
we simulate an adaptive scheme for a five-parameter model on a four-dimensional
Hilbert space and observe performance close to the optimal local benchmark.
\end{abstract}

\section{Introduction\label{sec:introduction}}

The simultaneous estimation of several parameters of a quantum state
requires a tradeoff between measurements that extract information
about different parameters. Determining this tradeoff, rather than
bounding the precision of each parameter separately, is a central
problem in quantum statistical inference \cite{holevo}. For qubits,
the problem admits an explicit solution in terms of the symmetric
logarithmic derivative (SLD) Fisher information. In this paper, we
establish an analogous complete solution for quantum statistical models
contained in general spin factors. These models retain the ball geometry
of the qubit state space while allowing more than three independent
parameters. We characterize their entire attainable Fisher-information
region and construct measurements attaining the optimal weighted covariance
for every positive definite weight.

Let $\mathcal{H}$ be a finite-dimensional Hilbert space, and consider
a smooth quantum statistical model $\{\rho_{\theta}\mid\theta\in\Theta\subset\mathbb{R}^{d}\}$,
where $\Theta$ is open. Throughout this paper, measurements are finite-outcome
POVMs on a single copy of the system. A POVM $M=(M_{x})_{x\in\mathcal{X}}$
induces the probabilities
\[
p_{\theta}(x;M)=\Tr(\rho_{\theta}M_{x}).
\]
An estimator is a pair $(M,\hat{\theta})$, where $\hat{\theta}:\mathcal{X}\to\mathbb{R}^{d}$
assigns an estimate to each outcome. It is locally unbiased at $\theta_{0}\in\Theta$
if
\begin{align}
\sum_{x}\hat{\theta}^{i}(x)p_{\theta_{0}}(x;M) & =\theta^{i}_{0},\nonumber \\
\sum_{x}\hat{\theta}^{i}(x)\left.\partial_{j}p_{\theta}(x;M)\right|_{\theta=\theta_{0}} & =\delta^{i}_{j},\qquad i,j=1,\ldots,d.\label{eq:local_unbiased}
\end{align}
For such an estimator, we write its covariance matrix as
\[
V_{\theta_{0}}[M,\hat{\theta}]_{ij}=\sum_{x}p_{\theta_{0}}(x;M)\bigl(\hat{\theta}^{i}(x)-\theta^{i}_{0}\bigr)\bigl(\hat{\theta}^{j}(x)-\theta^{j}_{0}\bigr).
\]
We use $\operatorname{Tr}$ for the trace on $\mathcal{H}$ and $\operatorname{tr}$
for the trace of matrices indexed by the model parameters.

The classical Fisher information obtained from $M$ is
\begin{equation}
F_{\theta_{0}}[M]_{ij}=\sum_{x:\,p_{\theta_{0}}(x;M)>0}\frac{\Tr[(\partial_{i}\rho_{\theta_{0}})M_{x}]\,\Tr[(\partial_{j}\rho_{\theta_{0}})M_{x}]}{\Tr(\rho_{\theta_{0}}M_{x})}.\label{eq:classical_fisher}
\end{equation}
Here and below, $\partial_{i}\rho_{\theta_{0}}:=\left.\partial_{i}\rho_{\theta}\right|_{\theta=\theta_{0}}$.
The restriction to positive-probability outcomes specifies the pointwise
Fisher information, including when $\rho_{\theta_{0}}$ is not faithful.
Since the model is smooth on an open parameter set, $p_{\theta_{0}}(x;M)=0$
implies $\left.\partial_{i}p_{\theta}(x;M)\right|_{\theta=\theta_{0}}=0$.
The SLDs $L_{\theta_{0},i}$ are Hermitian operators defined, together
with the SLD Fisher information matrix $J_{\theta_{0}}$, by
\begin{align}
\partial_{i}\rho_{\theta_{0}} & =\frac{1}{2}\bigl(\rho_{\theta_{0}}L_{\theta_{0},i}+L_{\theta_{0},i}\rho_{\theta_{0}}\bigr),\nonumber \\
(J_{\theta_{0}})_{ij} & =\frac{1}{2}\Tr\bigl[\rho_{\theta_{0}}(L_{\theta_{0},i}L_{\theta_{0},j}+L_{\theta_{0},j}L_{\theta_{0},i})\bigr].\label{eq:sld_fisher}
\end{align}
We assume $J_{\theta_{0}}>0$. The SLDs need not be unique for a nonfaithful
state, but $J_{\theta_{0}}$ is independent of this choice.

The classical and SLD Cram\'er--Rao inequalities give
\begin{equation}
V_{\theta_{0}}[M,\hat{\theta}]\succeq F_{\theta_{0}}[M]^{-1}\succeq J^{-1}_{\theta_{0}}\label{eq:SLD_Cramer}
\end{equation}
for every locally unbiased estimator. The first inequality is attainable
for each measurement with nonsingular Fisher information, whereas
the second is generally not attainable as a matrix equality. We therefore
minimize $\tr(WV_{\theta_{0}}[M,\hat{\theta}])$ for a real symmetric
positive definite weight matrix $W$. For qubit models, the attainable
weighted bound was established in the two-parameter case by Nagaoka
\cite{qubit_nagaoka} (see also Ref. \cite{qubit_fujiwara}) and in
the three-parameter case by Hayashi \cite{qubit_hayashi} and Gill
and Massar \cite{gill_massar}. A proof using Bessel's inequality
for the SLD inner product, together with an explicit construction
by randomized projective measurements, was given in Ref. \cite{yama_tomo}.
For a general system of dimension $D=\dim\mathcal{H}$, the Gill--Massar
inequality gives
\begin{equation}
\tr(WV_{\theta_{0}}[M,\hat{\theta}])\geq\frac{1}{D-1}\left(\tr\sqrt{J^{-1/2}_{\theta_{0}}WJ^{-1/2}_{\theta_{0}}}\right)^{2}.\label{eq:gene_gill_massar_bound}
\end{equation}
For qubits, $D=2$, this bound is attainable for every $W>0$. For
higher-dimensional models, it is not generally sharp.

We consider the corresponding estimation problem for matrix representations
of spin factors. Let $\gamma_{1},\ldots,\gamma_{q}\in\mathcal{B}_{h}(\mathcal{H})$,
$q\geq1$, be Hermitian operators satisfying
\begin{align}
\gamma_{i}\circ\gamma_{j} & :=\frac{1}{2}(\gamma_{i}\gamma_{j}+\gamma_{j}\gamma_{i})=\delta_{ij}I,\label{eq:def_spin_fac}\\
\Tr\gamma_{i} & =0,\label{eq:def_spin_fac0}
\end{align}
where $\mathcal{B}_{h}(\mathcal{H})$ denotes the real vector space
of Hermitian operators on the finite-dimensional Hilbert space $\mathcal{H}$.
The real linear space
\[
\Gamma^{q}=\operatorname{span}_{\mathbb{R}}\{I,\gamma_{1},\ldots,\gamma_{q}\}\subset\mathcal{B}_{h}(\mathcal{H})
\]
is closed under the Jordan product and is called a spin factor \cite{jordan1,jordan3}.
Its density operators form the state space
\begin{equation}
\mathcal{S}(\Gamma^{q})=\left\{ \frac{1}{D}\left(I+\sum^{q}_{i=1}\zeta^{i}\gamma_{i}\right)\,\middle|\,\boldsymbol{\zeta}\in\mathbb{R}^{q},\ \|\boldsymbol{\zeta}\|\leq1\right\} .\label{eq:spin_factor_states}
\end{equation}
Thus $q$ is the dimension of the state ball, whereas the real vector
space $\Gamma^{q}$ has dimension $q+1$. Taking $q=3$ and the Pauli
matrices recovers the qubit state space. For $q>3$, the full spin-factor
state space cannot be represented by a qubit, although it admits a
representation on a higher-dimensional quantum system.

Our main result combines the complete Fisher-information region with
the corresponding attainable covariance bound.
\begin{thm}[Fisher-information region and attainable bound]
\label{thm:main}Let $\{\rho_{\theta}\in\mathcal{S}(\Gamma^{q})\mid\theta\in\Theta\subset\mathbb{R}^{d}\}$
be a smooth model, where $\Theta$ is open and $1\leq d\leq q$. Fix
$\theta_{0}\in\Theta$ and assume $J_{\theta_{0}}>0$. The state $\rho_{\theta_{0}}$
need not be faithful. Then the following statements hold.
\begin{description}
\item [{(i)}] The attainable normalized Fisher-information region is
\begin{align}
\widehat{\mathcal{F}}_{\theta_{0}} & :=\left\{ J^{-1/2}_{\theta_{0}}F_{\theta_{0}}[M]J^{-1/2}_{\theta_{0}}\,\middle|\,M\text{ is a finite-outcome POVM on }\mathcal{H}\right\} \nonumber \\
 & =\mathcal{K}_{d}:=\left\{ K\in\mathbb{R}^{d\times d}\,\middle|\,K=K^{\mathsf{T}},\ K\succeq0,\ \operatorname{tr}K\leq1\right\} .\label{eq:main_fisher_region}
\end{align}
\item [{(ii)}] For every real symmetric positive definite $W$, every locally
unbiased estimator $(M,\hat{\theta})$ at $\theta_{0}$ satisfies
\begin{equation}
\tr(WV_{\theta_{0}}[M,\hat{\theta}])\geq c^{(SF)}_{\theta_{0},W}:=\left(\tr\sqrt{J^{-1/2}_{\theta_{0}}WJ^{-1/2}_{\theta_{0}}}\right)^{2}.\label{eq:main_spin_factor_bound}
\end{equation}
The bound is attainable. More precisely, write
\[
R:=\sqrt{J^{-1/2}_{\theta_{0}}WJ^{-1/2}_{\theta_{0}}}=U\,\operatorname{diag}(s_{1},\ldots,s_{d})\,U^{\mathsf{T}},\qquad U\in O(d).
\]
Let $M^{(i)}$ be the spectral measurement of
\begin{equation}
\hat{L}^{i}=\sum^{d}_{k=1}(U^{\mathsf{T}}J^{-1/2}_{\theta_{0}})_{ik}L_{\theta_{0},k},\qquad i=1,\ldots,d.\label{eq:optimal_sld_directions}
\end{equation}
Then the randomized measurement
\begin{equation}
M=\bigoplus^{d}_{i=1}p_{i}M^{(i)},\qquad p_{i}=\frac{s_{i}}{\sum^{d}_{j=1}s_{j}},\label{eq:optimal_randomized_measurement}
\end{equation}
admits a locally unbiased estimator attaining Eq. (\ref{eq:main_spin_factor_bound}).
\end{description}
\end{thm}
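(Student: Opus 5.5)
The plan is to prove the inclusion $\widehat{\mathcal{F}}_{\theta_{0}}\subseteq\mathcal{K}_{d}$ by reducing to the spin factor, then to show every element of $\mathcal{K}_{d}$ is attained by an explicit measurement, and finally to derive (ii) from (i) by a matrix Cauchy--Schwarz optimization.

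\textbf{Step 1 (projection onto the spin factor).} Let $\mathcal{E}$ be the Hilbert--Schmidt orthogonal projection onto $\Gamma^{q}$:
\[
\mathcal{E}(X)=\frac{1}{D}\Tr(X)\,I+\sum_{i}\frac{1}{D}\Tr(X\gamma_{i})\,\gamma_{i}.
\]
This uses the orthogonality $\Tr(\gamma_i\gamma_j)=D\delta_{ij}$ together with Eq.~(\ref{eq:def_spin_fac0}).
\begin{itemize}
\item $\mathcal{E}$ is unital and positive. For $X\succeq0$ and any unit vector $u\in\mathbb{R}^{q}$, the operator $u\cdot\gamma$ squares to $I$, so its eigenvalues are $\pm1$. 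Hence $|\Tr(X\,u\cdot\gamma)|\leq\Tr X$, which says exactly that the coefficient vector of $\mathcal{E}(X)$ lies in the cone $a\geq\|b\|$.
\item Since $\rho_{\theta}$ and $\partial_{i}\rho_{\theta}$ lie in $\Gamma^{q}$, replacing each $M_{x}$ by $\mathcal{E}(M_{x})$ preserves all probabilities and their derivatives. So $F_{\theta_{0}}[M]$ is unchanged.
\end{itemize}
We may therefore assume every effect has the form $E_{x}=a_{x}I+b_{x}\cdot\gamma$ with $a_{x}\geq\|b_{x}\|$. We also check that the SLDs can be chosen in $\Gamma^{q}$. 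Solving $\partial_{i}\rho=\rho\circ L_{i}$ inside the Jordan algebra is a linear problem on $\mathbb{R}^{q+1}$; it is solvable when $J>0$, including at boundary points, by restricting to the tangent directions. Because $J$ is independent of the choice of SLD, we may use these Jordan-algebra SLDs.

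\textbf{Step 2 (refinement and the trace bound; main obstacle).} Each effect has the two-eigenvalue decomposition $E_{x}=\lambda_{x}^{+}P_{x}^{+}+\lambda_{x}^{-}P_{x}^{-}$, where $P_{x}^{\pm}=\tfrac12(I\pm u_{x}\cdot\gamma)$ are minimal Jordan projections and $\lambda_{x}^{\pm}=a_{x}\pm\|b_{x}\|\geq0$. Splitting outcomes can only increase the classical Fisher information, by the data-processing (convexity) property of $(\Tr\partial\rho E)^{2}/\Tr\rho E$. It therefore suffices to bound $\operatorname{tr}(J^{-1}F)$ for POVMs whose elements are multiples of minimal projections $P$.

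For such $P$, the Jordan identity $\{P\,A\,P\}=\tau(A)P$ (with a scalar $\tau(A)$) on $\Gamma^{q}$ is the analogue of rank-one projectors for qubits. Using it, I would write
\[
\Tr(\partial_{i}\rho\,P)=\Tr\bigl((\rho\circ L_{i})P\bigr)=\langle L_{i},\,\xi_{P}\rangle_{\rho},
\]
where $\langle A,B\rangle_{\rho}=\Tr(\rho\,A\circ B)$ is the SLD inner product and $\xi_{P}$ is a vector with $\|\xi_{P}\|^{2}_{\rho}=\Tr(\rho P)$ (up to the weight on $P$). Then
\[
\operatorname{tr}(J^{-1}F)=\sum_{x}\frac{\|\Pi\,\xi_{x}\|_{\rho}^{2}}{p_{x}},
\]
where $\Pi$ is the SLD-orthogonal projection onto $\operatorname{span}\{L_{i}\}$. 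The sum should then be bounded by $1$ via Bessel's inequality, after noting that the $L_{i}$ are SLD-orthogonal to $I$ (since $\Tr\partial_{i}\rho=0$) and that $\sum_{x}\xi_{x}$ corresponds to $I$. Making this Bessel argument precise at the level of $\Gamma^{q}$, including non-faithful $\rho$ with $\|\zeta\|=1$ where zero-probability outcomes appear, is where I expect the real work. If the abstract route stalls, I would instead compute directly in coordinates: with $p_{x}=\lambda_{x}(1\pm\zeta\cdot u_{x})/2$, reduce the bound to the inequality $\sum_{x}(v\cdot b_{x})^{2}/p_{x}\leq\ldots$ on the ball.

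\textbf{Step 3 (attainability and part (ii)).} Each $\hat{L}^{i}$ lies in $\Gamma^{q}$, so it has two eigenvalues and its spectral measurement consists of two minimal projections. A direct computation shows its normalized Fisher information is $U e_{i}e_{i}^{\mathsf{T}}U^{\mathsf{T}}$; this is the SLD-projection equality case, since the $\hat{L}^{i}$ are SLD-orthonormal. Randomization with weights $p_{i}$ gives $F$ with normalized form $\sum_{i}p_{i}Ue_{i}e_{i}^{\mathsf{T}}U^{\mathsf{T}}$. Any $K\in\mathcal{K}_{d}$ is obtained this way by choosing $U$ to diagonalize $K$, with $p_{i}$ its eigenvalues, and mixing with a trivial measurement to absorb $1-\operatorname{tr}K$. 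This proves (i).

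For (ii), combine Eq.~(\ref{eq:SLD_Cramer}) with (i) to get
\[
\operatorname{tr}(WV)\geq\min_{K\in\mathcal{K}_{d},\,K>0}\operatorname{tr}(R^{2}K^{-1}).
\]
Cauchy--Schwarz gives $(\operatorname{tr}R)^{2}=(\operatorname{tr}(RK^{-1/2}K^{1/2}))^{2}\leq\operatorname{tr}(R^{2}K^{-1})\operatorname{tr}K$. Equality holds at $K=R/\operatorname{tr}R$, which is exactly the measurement in Eq.~(\ref{eq:optimal_randomized_measurement}). The estimator $\hat{\theta}=\theta_{0}+F^{-1}\times(\text{score})$ is locally unbiased and attains $V=F^{-1}$, which yields the bound $(\operatorname{tr}R)^{2}$.
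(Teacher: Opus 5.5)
Your Steps 1 and 3 follow the paper: the positive Hilbert--Schmidt projection onto $\Gamma^{q}$, randomized spectral measurements of normalized SLD directions, and Cauchy--Schwarz on $\mathcal{K}_d$. The gap is Step 2. The inequality $\operatorname{tr}(J^{-1}F_{\theta_0}[M])\le 1$ is not proved, yet both the inclusion $\widehat{\mathcal F}_{\theta_0}\subseteq\mathcal K_d$ and the lower bound in (ii) rest on it. Your hint that ``$\sum_x\xi_x$ corresponds to $I$'' also points the wrong way. Bessel's inequality must be applied to each effect separately, with $\widetilde L_0=I$ included in the SLD-orthonormal system. For a refined effect $E_x=\lambda_xP_x$ this gives
\[
\sum_{i=1}^{d}\langle\widetilde L_i,E_x\rangle_\rho^{2}\le\Tr(\rho E_x^{2})-p_x^{2}=\lambda_xp_x-p_x^{2},
\]
and summing over positive-probability outcomes yields $\operatorname{tr}\widehat F\le\sum_x\lambda_x-1$.

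The missing idea is the count that controls $\sum_x\lambda_x$. Every nontrivial projection in $\Gamma^q$ has trace $D/2$. Refining $aI$ as $aP_+(n)+aP_-(n)$ where needed, this gives $\sum_x\lambda_x=\frac{2}{D}\sum_x\Tr E_x=2$. This is exactly the paper's Lemma~\ref{lem:spin_povm_norm}, $\sum_x\|N_x\|\le 2$, which the paper combines with $N_x^2\preceq\|N_x\|N_x$ rather than refining. It is the only point where the spin-factor structure enters beyond the projection step. Without it, the same Bessel argument with rank-one refinements in dimension $D$ gives $\sum_x\lambda_x=D$, hence only the Gill--Massar value $D-1$.

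The parts you flagged as the hard work are not needed. The identity $PAP=\tau(A)P$ plays no role. Nonfaithful $\rho$ needs no special handling, since Bessel's inequality holds for the semidefinite form $\langle\cdot,\cdot\rangle_\rho$ and zero-probability outcomes simply drop out of the sum.

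There is a secondary issue in Step 3. Your ``direct computation'' assumes $\hat L^i\in\Gamma^q$, so that it has exactly two eigenvalues. At a nonfaithful $\rho_{\theta_0}$ the theorem allows arbitrary Hermitian SLD representatives, whose spectral measurements can have more outcomes. The paper avoids this: for any representative, $L(v)P_x=\lambda_xP_x$ gives $v^{\mathsf T}\widehat F v=\Tr(\rho L(v)^2)=1$, and then the trace bound together with $\widehat F\succeq0$ forces $\widehat F=vv^{\mathsf T}$. So this step also depends on the inequality you left open.

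Finally, in (ii) you should state why $F_{\theta_0}[M]>0$ before writing $V\succeq F^{-1}$. It follows from local unbiasedness: $\mathbb E_{\theta_0}[z\ell^{\mathsf T}]=I_d$ implies $\ker F=\{0\}$.
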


In Eq. (\ref{eq:optimal_randomized_measurement}), the measurement
label $i$ is retained as part of the outcome: if $M^{(i)}=(M^{(i)}_{x})_{x}$,
the randomized POVM has effects $(p_{i}M^{(i)}_{x})_{i,x}$. For the
resulting POVM, an attaining estimator is given, on positive-probability
outcomes, by
\begin{equation}
\hat{\theta}^{i}(x)=\theta^{i}_{0}+\sum^{d}_{k=1}\bigl(F_{\theta_{0}}[M]^{-1}\bigr)_{ik}\frac{\Tr[(\partial_{k}\rho_{\theta_{0}})M_{x}]}{\Tr(\rho_{\theta_{0}}M_{x})}.\label{eq:optimal_classic}
\end{equation}
Its values on zero-probability outcomes may be chosen arbitrarily.
Both the measurement and this estimator are designed at the fixed
point $\theta_{0}$; attainability is local. 

Theorem \ref{thm:main} identifies the complete single-copy information
tradeoff, not only its optimum for a particular weight. For $q=3$
in the Pauli representation, it recovers the qubit result. For $q>3$,
it applies to a different class of quantum statistical models on larger
Hilbert spaces. The factor $1/(D-1)$ in Eq. (\ref{eq:gene_gill_massar_bound})
is absent: once the SLD metric is accounted for, the attainable region
is independent of the dimension of the ambient representation.

The proof uses the Jordan structure of the model. The Hilbert--Schmidt
orthogonal projection onto $\Gamma^{q}$ is positive and unital, and
therefore maps every ambient POVM to a POVM with effects in $\Gamma^{q}$.
It preserves the outcome probabilities throughout the model and hence
also the classical Fisher information. This use of positive projections
connects the estimation problem with the operator-algebraic theory
of Jordan structures \cite{jordan3,effros_stormer}. Combining the
spectral structure of a spin factor with Bessel's inequality yields
$\operatorname{tr}(J^{-1}_{\theta_{0}}F_{\theta_{0}}[M])\leq1$. 

Conversely, spectral measurements of normalized SLD directions yield
normalized Fisher information matrices of the form $vv^{\mathsf{T}}$
for every unit vector $v\in\mathbb{R}^{d}$. Randomizing these measurements,
and if necessary the uninformative measurement $\{I\}$, gives all
of $\mathcal{K}_{d}$. The covariance bound and its attaining measurement
then follow from optimization over this set.

The dependence of the optimal measurement on $\theta_{0}$ motivates
an adaptive implementation \cite{adaptive_nagaoka,adaptive_fujiwara}.
Such methods have been demonstrated experimentally for photonic qubits
\cite{okamoto1,okamoto3}. We illustrate the present result numerically
for $\Gamma^{5}\subset\mathcal{B}_{h}(\mathbb{C}^{4})$. For this
five-parameter model and the weight $W=J_{\theta_{0}}$, the optimal
local cost is $5^{2}=25$, providing an explicit benchmark for the
adaptive scheme.

The remainder of the paper is organized as follows. Section \ref{sec:spin_factors}
develops the required properties of spin factors and positive projections.
Section \ref{sec:proof_main} proves Theorem \ref{thm:main}. Section
\ref{sec:simulation} presents the adaptive estimation scheme and
its numerical application to the $\Gamma^{5}$ model. Section \ref{sec:conclusion}
discusses the implications of the results and their extension to models
admitting a locally sufficient Jordan algebra of spin-factor type.
Background on the decomposition of matrix Jordan algebras is collected
in Appendix \ref{sec:real_jordan}.

\section{Spin Factors and Positive Projections\label{sec:spin_factors}}

This section develops the algebraic ingredients used in the proof
of Theorem \ref{thm:main}. We first describe the spectral structure
of spin factors and derive an operator-norm bound for their POVMs.
We then show that every POVM on the ambient Hilbert space can be replaced
by a POVM with effects in the spin factor without changing the outcome
probabilities of the model.

Throughout, $D=\dim\mathcal{H}$, and $\mathcal{B}_{h}(\mathcal{H})$
denotes the real vector space of Hermitian operators on $\mathcal{H}$.
We use $\|\cdot\|$ for the operator norm of an operator and for the
Euclidean norm of a real vector, as determined by the argument.

\subsection{Spectral structure and matrix representations\label{subsec:spin_spectral}}

Let $\gamma_{1},\ldots,\gamma_{q}$ satisfy Eqs. (\ref{eq:def_spin_fac})
and (\ref{eq:def_spin_fac0}). For $b\in\mathbb{R}^{q}$, write
\[
\gamma(b):=\sum^{q}_{j=1}b_{j}\gamma_{j}.
\]
The defining relations imply
\begin{equation}
\gamma(b)^{2}=\|b\|^{2}I,\qquad\Tr\gamma(b)=0.\label{eq:spin_clifford_identity}
\end{equation}
Thus every $A\in\Gamma^{q}$ has the form $A=aI+\gamma(b)$. If $b\ne0$,
its spectral decomposition is
\begin{equation}
A=(a+\|b\|)P_{+}(n)+(a-\|b\|)P_{-}(n),\qquad n=\frac{b}{\|b\|},\label{eq:spin_spectral_decomposition}
\end{equation}
where
\begin{equation}
P_{\pm}(n)=\frac{1}{2}\bigl(I\pm\gamma(n)\bigr),\qquad\|n\|=1.\label{eq:spin_spectral_projections}
\end{equation}
These are mutually orthogonal projections with sum $I$. Since $\Tr\gamma(n)=0$,
they satisfy
\begin{equation}
{\rm rank}P_{\pm}(n)=\Tr P_{\pm}(n)=\frac{D}{2}.\label{eq:spin_projection_rank}
\end{equation}
In particular, $D$ is even, and each of the two distinct eigenvalues
in Eq. (\ref{eq:spin_spectral_decomposition}) has multiplicity $D/2$.
If $b=0$, then $A=aI$ has only one eigenvalue. It follows that
\begin{equation}
A\geq0\quad\Longleftrightarrow\quad a\geq\|b\|.\label{eq:spin_positivity}
\end{equation}

Every projection in $\Gamma^{q}$ other than $0$ and $I$ is of the
form $P_{+}(n)$ for some unit vector $n$. Consequently, for $D>2$,
these are not rank-one projections on $\mathcal{H}$. The extreme
points of the restricted state space $\mathcal{S}(\Gamma^{q})$ are
\[
\frac{2}{D}P_{+}(n)=\frac{1}{D}\bigl(I+\gamma(n)\bigr),\qquad\|n\|=1.
\]
Their rank as density operators on $\mathcal{H}$ is also $D/2$.

For completeness, explicit matrix representations can be constructed
recursively. Let $\sigma_{1},\sigma_{2},\sigma_{3}$ be the Pauli
matrices. Given generators $\gamma_{1},\ldots,\gamma_{q}$ on $\mathcal{H}$,
define generators on $\mathbb{C}^{2}\otimes\mathcal{H}$ by
\begin{align}
\gamma'_{j} & =\sigma_{1}\otimes\gamma_{j}, &  & j=1,\ldots,q,\nonumber \\
\gamma'_{q+1} & =\sigma_{3}\otimes I, & \gamma'_{q+2} & =\sigma_{2}\otimes I.\label{eq:spin_recursive_representation}
\end{align}
They satisfy the same defining relations and generate a copy of $\Gamma^{q+2}$.
Starting with the first two Pauli matrices for $q=2$, or all three
for $q=3$, gives representations
\[
\Gamma^{q}\subset\mathcal{B}_{h}\!\left(\mathbb{C}^{2^{\lfloor q/2\rfloor}}\right),\qquad q\geq2.
\]
For $q=1$, a single Pauli matrix gives a representation on $\mathbb{C}^{2}$.
The results below apply to every representation satisfying Eqs. (\ref{eq:def_spin_fac})
and (\ref{eq:def_spin_fac0}), not only to these particular constructions.

\subsection{An operator-norm bound for spin-factor POVMs\label{subsec:spin_povm_norm}}

The two-eigenvalue structure yields a bound that is independent of
the dimension of the representation.
\begin{lem}[Operator-norm bound]
\label{lem:spin_povm_norm}For every $A\in\Gamma^{q}$ with $A\geq0$,
\begin{equation}
\frac{D}{2}\|A\|\leq\Tr A.\label{eq:tr_norm_ineq}
\end{equation}
Consequently, every POVM $N=(N_{x})_{x\in\mathcal{X}}$ with $N_{x}\in\Gamma^{q}$
satisfies
\begin{equation}
\sum_{x}\|N_{x}\|\leq2.\label{eq:POVM_ineq}
\end{equation}
\end{lem}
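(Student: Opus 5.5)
The plan is to read the inequality directly off the spectral structure established in Subsection \ref{subsec:spin_spectral}. Write $A=aI+\gamma(b)$ with $A\geq0$, so that $a\geq\|b\|$ by Eq. (\ref{eq:spin_positivity}). Since $\Tr\gamma(b)=0$ by Eq. (\ref{eq:spin_clifford_identity}), we have $\Tr A=aD$.

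Next we identify the operator norm, splitting into two cases.
\begin{itemize}
\item If $b=0$, then $A=aI$, so $\|A\|=a$.
\item If $b\neq0$, the spectral decomposition (\ref{eq:spin_spectral_decomposition}) shows that the eigenvalues are $a\pm\|b\|$. Both are nonnegative, so $\|A\|=a+\|b\|$.
\end{itemize}
In either case $\|A\|=a+\|b\|\leq 2a$, using $\|b\|\leq a$. Hence
\[
\frac{D}{2}\|A\|\leq Da=\Tr A,
\]
which is Eq. (\ref{eq:tr_norm_ineq}). An equivalent way to phrase this, which I would mention, is that the largest eigenvalue has multiplicity $D/2$ by Eq. (\ref{eq:spin_projection_rank}) and the other eigenvalue is nonnegative, so the trace is at least $(D/2)\|A\|$.

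For the POVM statement, apply Eq. (\ref{eq:tr_norm_ineq}) to each effect $N_x\geq0$ in $\Gamma^q$ and sum over outcomes:
\[
\sum_x\|N_x\|\leq\frac{2}{D}\sum_x\Tr N_x=\frac{2}{D}\Tr I=2.
\]
This is Eq. (\ref{eq:POVM_ineq}). If the POVM is infinite, the same bound holds for every finite partial sum.

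There is no real obstacle here. The only points needing care are the case $b=0$, where Eq. (\ref{eq:spin_spectral_decomposition}) does not apply, and the observation that positivity makes the operator norm equal to the top eigenvalue $a+\|b\|$ rather than $|a-\|b\||$.
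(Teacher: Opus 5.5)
Your proof is correct and matches the paper's argument: write $A=aI+\gamma(b)$, use $a\geq\|b\|$ to get $\|A\|=a+\|b\|\leq 2a=\frac{2}{D}\Tr A$, then sum over the effects using $\sum_x N_x=I$. Your separate treatment of $b=0$ is harmless extra care, and the remark on infinite POVMs is unnecessary because the paper only considers finite-outcome POVMs.
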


\begin{proof}
Write $A=aI+\gamma(b)$. By Eq. (\ref{eq:spin_positivity}), $a\geq\|b\|$,
and hence
\[
\|A\|=a+\|b\|\leq2a=\frac{2}{D}\Tr A.
\]
This proves Eq. (\ref{eq:tr_norm_ineq}). Applying it to each $N_{x}$
and using $\sum_{x}N_{x}=I$ gives
\[
\sum_{x}\|N_{x}\|\leq\frac{2}{D}\sum_{x}\Tr N_{x}=2.
\]
\end{proof}

\subsection{Positive projections and measurement reduction\label{subsec:positive_projections}}

A unital real Jordan subalgebra of $\mathcal{B}_{h}(\mathcal{H})$
is a real linear subspace $\mathcal{A}$ containing $I$ and closed
under the product
\[
A\circ B=\frac{1}{2}(AB+BA).
\]
In particular, $\Gamma^{q}$ is such an algebra: for $a,c\in\mathbb{R}$
and $b,e\in\mathbb{R}^{q}$,
\[
\bigl(aI+\gamma(b)\bigr)\circ\bigl(cI+\gamma(e)\bigr)=(ac+b\cdot e)I+\gamma(ae+cb).
\]
Equip $\mathcal{B}_{h}(\mathcal{H})$ with the real Hilbert--Schmidt
inner product
\[
\langle A,B\rangle_{\mathrm{HS}}=\Tr(AB).
\]
We use the following elementary property of the orthogonal projection
onto a unital Jordan subalgebra.
\begin{lem}[Positive orthogonal projection]
\label{lem:proj-J}Let $\mathcal{A}\subset\mathcal{B}_{h}(\mathcal{H})$
be a unital real Jordan subalgebra, and let
\[
\Pi_{\mathcal{A}}:\mathcal{B}_{h}(\mathcal{H})\longrightarrow\mathcal{A}
\]
be the Hilbert--Schmidt orthogonal projection. Then $\Pi_{\mathcal{A}}$
is positive, unital, and trace-preserving. It is also faithful: if
$B\geq0$ and $\Pi_{\mathcal{A}}(B)=0$, then $B=0$.
\end{lem}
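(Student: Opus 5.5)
The plan is to derive all four properties from the defining relation of the orthogonal projection, $\Tr\bigl((B-\Pi_{\mathcal{A}}(B))C\bigr)=0$ for all $C\in\mathcal{A}$, together with the fact that $\Pi_{\mathcal{A}}$ is self-adjoint for $\langle\cdot,\cdot\rangle_{\mathrm{HS}}$.

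\emph{Unital and trace-preserving.} Since $I\in\mathcal{A}$ and $\Pi_{\mathcal{A}}$ fixes $\mathcal{A}$, we have $\Pi_{\mathcal{A}}(I)=I$. For trace preservation we use self-adjointness:
\[
\Tr\Pi_{\mathcal{A}}(B)=\langle I,\Pi_{\mathcal{A}}(B)\rangle_{\mathrm{HS}}=\langle\Pi_{\mathcal{A}}(I),B\rangle_{\mathrm{HS}}=\Tr B .
\]

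\emph{Positivity.} This is the only step that uses the Jordan structure, and it is the main obstacle. First I would show that $\mathcal{A}$ is closed under polynomial functional calculus. For $A\in\mathcal{A}$ we have $A\circ A=A^{2}$. Inductively, $A\circ A^{n}=A^{n+1}$, because $A$ commutes with its own powers. Hence every real polynomial in $A$ lies in $\mathcal{A}$, using $I\in\mathcal{A}$ for the constant term. Since $A$ is Hermitian with finitely many distinct eigenvalues, each spectral projection of $A$ is a Lagrange interpolation polynomial in $A$, so it also lies in $\mathcal{A}$.

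Now let $B\geq0$ and $A_{0}=\Pi_{\mathcal{A}}(B)$. Let $P\in\mathcal{A}$ be the sum of the spectral projections of $A_{0}$ belonging to its negative eigenvalues. By orthogonality, $\Tr(A_{0}P)=\Tr(BP)$. The right-hand side satisfies $\Tr(BP)=\Tr(PBP)\geq0$. The left-hand side equals the sum of the negative eigenvalues of $A_{0}$, counted with multiplicity, and this sum is strictly negative unless $P=0$. Therefore $P=0$, which means $A_{0}\geq0$.

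\emph{Faithfulness.} If $B\geq0$ and $\Pi_{\mathcal{A}}(B)=0$, then trace preservation gives $\Tr B=\Tr\Pi_{\mathcal{A}}(B)=0$. A positive operator with zero trace vanishes, so $B=0$.
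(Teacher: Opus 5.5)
Your proof is correct and follows essentially the same route as the paper. You use polynomial functional calculus to place the spectral projections of $\Pi_{\mathcal{A}}(B)$ in $\mathcal{A}$, and orthogonality then transfers $\Tr(PB)\geq0$ to the eigenvalues. The only difference is cosmetic: you test against the single projection onto the negative eigenspaces, whereas the paper tests each spectral projection $P_r$ separately to get $\lambda_r\Tr P_r\geq0$.
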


\begin{proof}
Since $I\in\mathcal{A}$, we have $\Pi_{\mathcal{A}}(I)=I$. Orthogonality
also gives, for every $B\in\mathcal{B}_{h}(\mathcal{H})$,
\begin{equation}
\Tr\Pi_{\mathcal{A}}(B)=\langle I,\Pi_{\mathcal{A}}(B)\rangle_{\mathrm{HS}}=\langle I,B\rangle_{\mathrm{HS}}=\Tr B.\label{eq:jordan_projection_trace}
\end{equation}

To prove positivity, let $B\geq0$ and write the spectral decomposition
of $C:=\Pi_{\mathcal{A}}(B)$ as $C=\sum_{r}\lambda_{r}P_{r}$, where
the $\lambda_{r}$ are distinct. All powers of $C$ belong to $\mathcal{A}$,
because $C^{k+1}=C^{k}\circ C$. Since $I\in\mathcal{A}$, every real
polynomial in $C$ also belongs to $\mathcal{A}$. Each spectral projection
$P_{r}$ is such a polynomial, so $P_{r}\in\mathcal{A}$. Therefore
\[
\lambda_{r}\Tr P_{r}=\Tr\bigl(P_{r}\Pi_{\mathcal{A}}(B)\bigr)=\Tr(P_{r}B)\geq0.
\]
As $\Tr P_{r}>0$, all $\lambda_{r}$ are nonnegative, which proves
$\Pi_{\mathcal{A}}(B)\geq0$. Finally, if $B\geq0$ and $\Pi_{\mathcal{A}}(B)=0$,
then Eq. (\ref{eq:jordan_projection_trace}) gives $\Tr B=0$, and
hence $B=0$.
\end{proof}

For the spin factor, write $\Pi:=\Pi_{\Gamma^{q}}$. The defining
relations give $\Tr(\gamma_{i}\gamma_{j})=D\delta_{ij}$, so the projection
has the explicit form
\begin{equation}
\Pi(B)=\frac{\Tr B}{D}I+\frac{1}{D}\sum^{q}_{j=1}\Tr(B\gamma_{j})\gamma_{j},\qquad B\in\mathcal{B}_{h}(\mathcal{H}).\label{eq:spin_hs_projection}
\end{equation}
Given any POVM $M=(M_{x})_{x\in\mathcal{X}}$ on $\mathcal{H}$, set
\begin{equation}
N_{x}:=\Pi(M_{x}).\label{eq:projected_povm}
\end{equation}
By Lemma \ref{lem:proj-J}, $N_{x}\geq0$ and $\sum_{x}N_{x}=\Pi(I)=I$.
Thus $N=(N_{x})_{x}$ is a POVM with effects in $\Gamma^{q}$.

For every $\rho\in\mathcal{S}(\Gamma^{q})$, orthogonality yields
\begin{equation}
\Tr(\rho N_{x})=\langle\rho,\Pi(M_{x})\rangle_{\mathrm{HS}}=\langle\rho,M_{x}\rangle_{\mathrm{HS}}=\Tr(\rho M_{x}).\label{eq:projection_preserves_probabilities}
\end{equation}
In particular, for any smooth spin-factor model, this identity holds
at every parameter value. Differentiating gives

\begin{equation}
\Tr\bigl[(\partial_{i}\rho_{\theta})N_{x}\bigr]=\Tr\bigl[(\partial_{i}\rho_{\theta})M_{x}\bigr],\qquad i=1,\ldots,d,\label{eq:projection_preserves_derivatives}
\end{equation}
and hence
\begin{equation}
F_{\theta}[N]=F_{\theta}[M].\label{eq:projection_preserves_fisher}
\end{equation}
This remains valid when $\rho_{\theta}$ is not faithful, since both
measurements have identical probabilities and derivatives, including
the same zero-probability outcomes.

Thus restricting the measurement effects to $\Gamma^{q}$ entails
no loss of statistical information for the model. In particular, using
the same estimator function with $M$ and $N$ gives identical local
unbiasedness conditions and covariance matrices. This reduction, together
with Eq. (\ref{eq:POVM_ineq}), is the starting point for the proof
of Theorem \ref{thm:main} in Sec. \ref{sec:proof_main}.

\section{Proof of the Main Theorem\label{sec:proof_main}}

Fix $\theta_{0}\in\Theta$ and abbreviate $\rho=\rho_{\theta_{0}}$,
$L_{i}=L_{\theta_{0},i}$, and $J=J_{\theta_{0}}>0$. All measurements
and randomization probabilities constructed below are fixed at this
design point when derivatives with respect to the model parameter
are taken. We first prove the Fisher-information bound, then characterize
its entire attainable region, and finally optimize the weighted covariance
over this region.

\subsection{The Fisher-information bound\label{subsec:fisher_bound}}

On $\mathcal{B}_{h}(\mathcal{H})$, define the real symmetric bilinear
form
\begin{equation}
\langle A,B\rangle_{\rho}:=\frac{1}{2}\Tr\bigl[\rho(AB+BA)\bigr].\label{eq:sld_bilinear_form}
\end{equation}
This form is positive semidefinite, since $\langle A,A\rangle_{\rho}=\Tr(\rho A^{2})\geq0$.
It need not be an inner product when $\rho$ is not faithful. The
SLD equations imply
\[
\langle I,L_{i}\rangle_{\rho}=0,\qquad\langle L_{i},L_{j}\rangle_{\rho}=J_{ij}.
\]
Introduce normalized SLDs
\begin{equation}
\widetilde{L}_{i}:=\sum^{d}_{j=1}(J^{-1/2})_{ij}L_{j},\qquad i=1,\ldots,d,\qquad\widetilde{L}_{0}:=I.\label{eq:normalized_slds}
\end{equation}
They satisfy
\begin{equation}
\langle\widetilde{L}_{i},\widetilde{L}_{j}\rangle_{\rho}=\delta_{ij},\qquad i,j=0,\ldots,d.\label{eq:normalized_sld_orthogonality}
\end{equation}
Consequently, for every $A\in\mathcal{B}_{h}(\mathcal{H})$,
\begin{equation}
\sum^{d}_{i=0}\langle\widetilde{L}_{i},A\rangle^{2}_{\rho}\leq\langle A,A\rangle_{\rho}.\label{eq:sld_bessel}
\end{equation}
Indeed, this follows by expanding the nonnegative quantity
\[
\left\langle A-\sum^{d}_{i=0}\langle\widetilde{L}_{i},A\rangle_{\rho}\widetilde{L}_{i},A-\sum^{d}_{i=0}\langle\widetilde{L}_{i},A\rangle_{\rho}\widetilde{L}_{i}\right\rangle _{\rho}.
\]
Thus Bessel's inequality remains valid here without assuming that
$\rho$ is faithful.

For a POVM $M=(M_{x})_{x}$, write $p_{x}:=\Tr(\rho M_{x})$ and define
\begin{equation}
\widehat{F}[M]:=J^{-1/2}F_{\theta_{0}}[M]J^{-1/2}.\label{eq:hat_F1}
\end{equation}
Using the SLD equations in Eq. (\ref{eq:classical_fisher}) gives
\begin{equation}
\widehat{F}[M]_{ij}=\sum_{x:\,p_{x}>0}\frac{\langle\widetilde{L}_{i},M_{x}\rangle_{\rho}\langle\widetilde{L}_{j},M_{x}\rangle_{\rho}}{p_{x}},\qquad i,j=1,\ldots,d.\label{eq:normalized_fisher_entries}
\end{equation}
In particular, $\widehat{F}[M]\succeq0$.
\begin{thm}[Spin-factor Fisher-information bound]
\label{thm:GM_ineq}Every POVM $M$ on $\mathcal{H}$ satisfies
\begin{equation}
\tr\widehat{F}[M]=\tr\bigl(J^{-1}F_{\theta_{0}}[M]\bigr)\leq1.\label{eq:spin_fisher_trace_bound}
\end{equation}
\end{thm}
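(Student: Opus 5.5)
First I will use the measurement reduction of Sec.~\ref{subsec:positive_projections}. Replacing $M$ by $N_x=\Pi(M_x)$ leaves $F_{\theta_0}$ unchanged, by Eq.~(\ref{eq:projection_preserves_fisher}). So we may assume every effect $M_x$ lies in $\Gamma^q$. Taking traces in Eq.~(\ref{eq:normalized_fisher_entries}),
\[
\tr\widehat F[M]=\sum_{x:\,p_x>0}\frac{\sum_{i=1}^d\langle\widetilde L_i,M_x\rangle_\rho^2}{p_x}.
\]
Bessel's inequality (\ref{eq:sld_bessel}) with $A=M_x$, keeping the $i=0$ term separate, gives
\[
\sum_{i=1}^d\langle\widetilde L_i,M_x\rangle_\rho^2\le\langle M_x,M_x\rangle_\rho-\langle I,M_x\rangle_\rho^2=\Tr(\rho M_x^2)-p_x^2 .
\]
Hence
\[
\tr\widehat F[M]\le\sum_{x:\,p_x>0}\Bigl(\frac{\Tr(\rho M_x^2)}{p_x}-p_x\Bigr).
\]
It then suffices to show $\Tr(\rho M_x^2)/p_x-p_x\le\|M_x\|-\Tr(\rho M_x)$ for each outcome, or some summable variant of this.

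The key step is to use the two-eigenvalue structure of $M_x$. Write $M_x=a\,I+\gamma(b)$ with $\lambda_\pm=a\pm\|b\|\ge0$. Every such operator satisfies $M_x^2=(\lambda_++\lambda_-)M_x-\lambda_+\lambda_-I$. Therefore
\[
\Tr(\rho M_x^2)=(\lambda_++\lambda_-)p_x-\lambda_+\lambda_-,
\]
and so
\[
\frac{\Tr(\rho M_x^2)}{p_x}-p_x=\lambda_++\lambda_--\frac{\lambda_+\lambda_-}{p_x}-p_x=\frac{(\lambda_+-p_x)(p_x-\lambda_-)}{p_x}.
\]
Since $\lambda_-\le p_x\le\lambda_+$, this is at most $\lambda_+-p_x\le\lambda_+-\lambda_-$. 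The cruder bound $\lambda_+-p_x$ gives $\sum_x(\|M_x\|-p_x)$. Lemma~\ref{lem:spin_povm_norm} bounds this by $2-1=1$. That yields $\tr\widehat F[M]\le1$, as claimed.

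I expect the main obstacle to be checking the bookkeeping rather than the inequality itself. Three points need care. First, outcomes with $p_x=0$ contribute zero on both sides. Second, dropping them still keeps $\sum_{x:p_x>0}\|M_x\|\le2$ and $\sum_{x:p_x>0}p_x=1$. Third, Bessel applies for nonfaithful $\rho$, which the paper already noted. I also need $\rho\in\mathcal S(\Gamma^q)$ only so that the projection step is legitimate. The quadratic identity for $M_x$ holds for any element of $\Gamma^q$, including the degenerate case $b=0$ where $\lambda_+=\lambda_-$.
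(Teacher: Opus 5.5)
Your proof is correct and follows essentially the same route as the paper: projection onto $\Gamma^{q}$, Bessel's inequality with the $i=0$ term split off, the per-outcome bound $\Tr(\rho N_x^2)\le\|N_x\|\,p_x$, and then Lemma~\ref{lem:spin_povm_norm}. The only difference is how you get that per-outcome bound. You use the spin-factor quadratic identity $N_x^2=(\lambda_++\lambda_-)N_x-\lambda_+\lambda_-I$, which gives the exact value $(\lambda_+-p_x)(p_x-\lambda_-)/p_x$, and then discard the factor $(p_x-\lambda_-)/p_x\le 1$; the paper instead uses the general inequality $N_x^2\preceq\|N_x\|N_x$, valid for any positive operator.
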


\begin{proof}
By the measurement reduction in Sec. \ref{subsec:positive_projections},
the POVM $N=(\Pi(M_{x}))_{x}$ has effects in $\Gamma^{q}$ and satisfies
$F_{\theta_{0}}[N]=F_{\theta_{0}}[M]$. It therefore suffices to prove
the bound for $N$. Set $p_{x}=\Tr(\rho N_{x})$ and $\mathcal{X}_{+}=\{x\mid p_{x}>0\}$.
Since $\langle I,N_{x}\rangle_{\rho}=p_{x}$, Eq. (\ref{eq:sld_bessel})
implies
\[
\sum^{d}_{i=1}\langle\widetilde{L}_{i},N_{x}\rangle^{2}_{\rho}\leq\Tr(\rho N^{2}_{x})-p^{2}_{x}.
\]
Moreover, positivity of $N_{x}$ gives $N^{2}_{x}\preceq\|N_{x}\|N_{x}$.
Using $\sum_{x\in\mathcal{X}_{+}}p_{x}=1$, we obtain
\begin{align}
\tr\widehat{F}[N] & =\sum_{x\in\mathcal{X}_{+}}\frac{\sum^{d}_{i=1}\langle\widetilde{L}_{i},N_{x}\rangle^{2}_{\rho}}{p_{x}}\nonumber \\
 & \leq\sum_{x\in\mathcal{X}_{+}}\frac{\Tr(\rho N^{2}_{x})}{p_{x}}-1\nonumber \\
 & \leq\sum_{x\in\mathcal{X}_{+}}\|N_{x}\|-1\nonumber \\
 & \leq\sum_{x}\|N_{x}\|-1\leq1,\label{eq:spin_fisher_bound_proof}
\end{align}
where the last inequality is Eq. (\ref{eq:POVM_ineq}). Only positive-probability
outcomes appear in the denominators, so the argument also applies
to nonfaithful states.
\end{proof}

\subsection{The attainable Fisher-information region\label{subsec:attainable_fisher_region}}

We next show that the trace constraint in Theorem \ref{thm:GM_ineq}
completely characterizes the attainable normalized Fisher information.
\begin{lem}[Spectral measurements of normalized SLD directions]
\label{lem:one_para}For a unit vector $v\in\mathbb{R}^{d}$, let
$M^{(v)}$ be the spectral measurement of
\begin{equation}
L(v):=\sum^{d}_{i=1}v_{i}\widetilde{L}_{i}.\label{eq:sld_direction}
\end{equation}
Then
\begin{equation}
\widehat{F}[M^{(v)}]=vv^{\mathsf{T}}.\label{eq:rank_one_fisher}
\end{equation}
\end{lem}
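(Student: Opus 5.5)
The plan is to avoid computing every entry of $\widehat{F}[M^{(v)}]$ directly. Instead I would show that the quadratic form of $\widehat{F}[M^{(v)}]$ in the direction $v$ already equals $1$, and then invoke the trace bound of Theorem \ref{thm:GM_ineq} to force the matrix to be rank one. This has the advantage of not requiring $L(v)$ to lie in $\Gamma^{q}$: the SLDs need not be unique for nonfaithful $\rho$, but the argument uses only the SLD bilinear form (\ref{eq:sld_bilinear_form}).

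First, write the spectral decomposition $L(v)=\sum_{k}\lambda_{k}P_{k}$ with distinct eigenvalues $\lambda_{k}$, so that $M^{(v)}=(P_{k})_{k}$ and $p_{k}=\Tr(\rho P_{k})$. Since $P_{k}$ commutes with $L(v)$, we have $L(v)P_{k}=P_{k}L(v)=\lambda_{k}P_{k}$, and therefore
\[
\langle L(v),P_{k}\rangle_{\rho}=\lambda_{k}\Tr(\rho P_{k})=\lambda_{k}p_{k}.
\]
By linearity of Eq. (\ref{eq:normalized_fisher_entries}) in each index, $v^{\mathsf{T}}\widehat{F}[M^{(v)}]v=\sum_{k:\,p_{k}>0}\langle L(v),P_{k}\rangle_{\rho}^{2}/p_{k}=\sum_{k:\,p_{k}>0}\lambda_{k}^{2}p_{k}$. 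The terms with $p_{k}=0$ contribute nothing to $\sum_{k}\lambda_{k}^{2}p_{k}$, so this sum equals $\Tr(\rho L(v)^{2})=\langle L(v),L(v)\rangle_{\rho}$. By Eq. (\ref{eq:normalized_sld_orthogonality}), this is $\sum_{i,j}v_{i}v_{j}\delta_{ij}=\|v\|^{2}=1$.

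Second, combine this with the constraints already available. We know $\widehat{F}[M^{(v)}]\succeq0$, and Theorem \ref{thm:GM_ineq} gives $\tr\widehat{F}[M^{(v)}]\leq1$. Let $\mu_{1}\geq\cdots\geq\mu_{d}\geq0$ be its eigenvalues. Then $1=v^{\mathsf{T}}\widehat{F}v\leq\mu_{1}\leq\sum_{i}\mu_{i}\leq1$, so $\mu_{1}=1$ and all other $\mu_{i}=0$. Equality in the Rayleigh quotient bound makes $v$ an eigenvector for $\mu_{1}$. Hence $\widehat{F}[M^{(v)}]=vv^{\mathsf{T}}$.

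The only delicate point is the first step in the nonfaithful case. There I must make sure that the restriction to positive-probability outcomes in Eq. (\ref{eq:normalized_fisher_entries}) does not lose any of $\Tr(\rho L(v)^{2})$. This is handled by the observation that each dropped term is $\lambda_{k}^{2}p_{k}=0$. Everything else is a direct consequence of earlier results, with the spin-factor structure entering only through Theorem \ref{thm:GM_ineq}.
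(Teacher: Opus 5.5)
Your proposal is correct and follows the paper's proof: you show $v^{\mathsf{T}}\widehat{F}[M^{(v)}]v=\Tr(\rho L(v)^{2})=1$ via the spectral relation, then use $\widehat{F}\succeq0$ and the trace bound of Theorem \ref{thm:GM_ineq} to force $\widehat{F}[M^{(v)}]=vv^{\mathsf{T}}$. The only difference is in the final linear-algebra step. You argue through the eigenvalues and equality in the Rayleigh quotient, whereas the paper shows the diagonal entries vanish in an orthonormal basis beginning with $v$ and then uses positivity; this difference is cosmetic.
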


\begin{proof}
Write $L(v)=\sum_{x}\lambda_{x}P_{x}$ and $M^{(v)}=(P_{x})_{x}$,
and set $p_{x}=\Tr(\rho P_{x})$. For each $x$, the spectral relation
$L(v)P_{x}=P_{x}L(v)=\lambda_{x}P_{x}$ gives
\[
\langle L(v),P_{x}\rangle_{\rho}=\lambda_{x}p_{x}.
\]
Therefore, by Eq. (\ref{eq:normalized_fisher_entries}),
\begin{align}
v^{\mathsf{T}}\widehat{F}[M^{(v)}]v & =\sum_{x:\,p_{x}>0}\frac{\langle L(v),P_{x}\rangle^{2}_{\rho}}{p_{x}}\nonumber \\
 & =\sum_{x}\lambda^{2}_{x}p_{x}=\Tr\bigl(\rho L(v)^{2}\bigr)=1.\label{eq:directional_fisher_attainment}
\end{align}
The zero-probability outcomes contribute zero to the second sum, and
the last equality follows from Eq. (\ref{eq:normalized_sld_orthogonality})
and $\|v\|=1$. On the other hand, $\widehat{F}[M^{(v)}]\succeq0$
and Theorem \ref{thm:GM_ineq} gives $\tr\widehat{F}[M^{(v)}]\leq1$.
In an orthonormal basis beginning with $v$, all other diagonal entries
must therefore vanish. Positivity then forces all rows and columns
corresponding to $v^{\perp}$ to vanish, proving Eq. (\ref{eq:rank_one_fisher}).
\end{proof}

The rank-one property in Eq. (\ref{eq:rank_one_fisher}) concerns
the $d\times d$ normalized Fisher information matrix, not the spectral
projections on $\mathcal{H}$. The proof requires only the SLD equations
and remains valid for any choice of Hermitian SLD representatives
at a nonfaithful state.
\begin{lem}[Complete Fisher-information region]
\label{lem:fisher_set} The attainable set $\widehat{\mathcal{F}}_{\theta_{0}}$
defined in Eq. (\ref{eq:main_fisher_region}) is convex and equals
$\mathcal{K}_{d}$.
\end{lem}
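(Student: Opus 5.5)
We prove the two inclusions and convexity separately. First, $\widehat{\mathcal{F}}_{\theta_{0}}\subseteq\mathcal{K}_{d}$: for every POVM $M$, $\widehat{F}[M]$ is real symmetric and positive semidefinite by Eq. (\ref{eq:normalized_fisher_entries}), being a sum of outer products $u_x u_x^{\mathsf{T}}/p_x$ with $(u_x)_i=\langle\widetilde{L}_{i},M_{x}\rangle_{\rho}$. Moreover, Theorem \ref{thm:GM_ineq} gives $\tr\widehat{F}[M]\leq1$.

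Next, convexity. Given POVMs $M^{(1)},\ldots,M^{(m)}$ and probabilities $(\lambda_{k})$, form the randomized POVM $\bigoplus_{k}\lambda_{k}M^{(k)}$ with effects $(\lambda_{k}M^{(k)}_{x})_{k,x}$, keeping the label $k$. Since the weights $\lambda_{k}$ are fixed numbers independent of $\theta$, each term in Eq. (\ref{eq:classical_fisher}) for outcome $(k,x)$ equals $\lambda_{k}$ times the corresponding term for $M^{(k)}$. The factor $\lambda_{k}^{2}$ in the numerator cancels against $\lambda_{k}$ in the denominator, and outcomes with $\lambda_k=0$ contribute nothing. Hence $F_{\theta_{0}}[\bigoplus_k\lambda_kM^{(k)}]=\sum_{k}\lambda_{k}F_{\theta_{0}}[M^{(k)}]$, and after conjugation by $J^{-1/2}$ the same holds for $\widehat{F}$. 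The randomized POVM is again finite-outcome, so $\widehat{\mathcal{F}}_{\theta_{0}}$ is closed under convex combinations.

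Finally, $\mathcal{K}_{d}\subseteq\widehat{\mathcal{F}}_{\theta_{0}}$. Let $K\in\mathcal{K}_{d}$ and diagonalize $K=\sum_{i=1}^{d}\mu_{i}v_{i}v_{i}^{\mathsf{T}}$ with orthonormal $v_{i}$, $\mu_{i}\geq0$ and $\sum_{i}\mu_{i}=\tr K\leq1$. By Lemma \ref{lem:one_para}, the spectral measurement $M^{(v_{i})}$ realizes $\widehat{F}=v_{i}v_{i}^{\mathsf{T}}$. The trivial POVM $\{I\}$ has zero score, since $\Tr(\partial_{i}\rho)=0$ by trace preservation of the model, so it realizes $\widehat{F}=0$. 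Randomizing $M^{(v_{1})},\ldots,M^{(v_{d})},\{I\}$ with weights $\mu_{1},\ldots,\mu_{d},1-\tr K$ and applying the convexity identity above yields $\widehat{F}=K$. This proves equality.

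The main point to check carefully is the additivity of Fisher information under randomization when some outcomes have zero probability, including the case where some $\lambda_k$ vanish. This is handled by noting that an outcome $(k,x)$ has positive probability exactly when $\lambda_k>0$ and $p_{\theta_0}(x;M^{(k)})>0$. Every other part follows directly from Theorem \ref{thm:GM_ineq} and Lemma \ref{lem:one_para}.
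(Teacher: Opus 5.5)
Your proposal is correct and follows the paper's proof essentially step by step. In both, the inclusion into $\mathcal{K}_d$ comes from Theorem~\ref{thm:GM_ineq}, and convexity comes from linearity of the Fisher information under label-retaining randomization with fixed weights. The reverse inclusion is obtained by randomizing the spectral measurements of Lemma~\ref{lem:one_para} together with the trivial POVM $\{I\}$, using the eigenvalues of $K$ as weights. Your extra verifications simply spell out points the paper states briefly: positive semidefiniteness via outer products, $\Tr(\partial_i\rho)=0$ for $\{I\}$, and the handling of zero-weight outcomes.
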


\begin{proof}
Let $M$ and $N$ be two POVMs and $t\in[0,1]$. Their randomized
measurement $tM\oplus(1-t)N$ has effects $(tM_{x})_{x}$ and $((1-t)N_{y})_{y}$,
with the measurement label retained in the outcome. Since $t$ is
fixed when differentiating, the Fisher information satisfies
\begin{equation}
\widehat{F}[tM\oplus(1-t)N]=t\widehat{F}[M]+(1-t)\widehat{F}[N].\label{eq:fisher_randomization}
\end{equation}
This also holds at $t=0$ and $t=1$, with the zero-weight effects
omitted. Thus the attainable set is convex. The inclusion $\widehat{\mathcal{F}}_{\theta_{0}}\subset\mathcal{K}_{d}$
follows from Theorem \ref{thm:GM_ineq}.

Conversely, let $K\in\mathcal{K}_{d}$ and write
\[
K=\sum^{d}_{a=1}\mu_{a}v_{a}v^{\mathsf{T}}_{a},\qquad\mu_{a}\geq0,\qquad\sum^{d}_{a=1}\mu_{a}\leq1,
\]
where $v_{1},\ldots,v_{d}$ are orthonormal. By Lemma \ref{lem:one_para},
each $v_{a}v^{\mathsf{T}}_{a}$ is attained by $M^{(v_{a})}$. The
one-outcome measurement $\{I\}$ has zero Fisher information. Hence
the finite-outcome POVM
\[
\left(\bigoplus^{d}_{a=1}\mu_{a}M^{(v_{a})}\right)\oplus\left(1-\sum^{d}_{a=1}\mu_{a}\right)\{I\}
\]
has normalized Fisher information $K$ by Eq. (\ref{eq:fisher_randomization}).
This proves the reverse inclusion and establishes part (i) of Theorem
\ref{thm:main}.
\end{proof}

\subsection{\label{subsec:weighted_optimization}Weighted optimization}

The remaining optimization is a matrix problem on $\mathcal{K}_{d}$.
We explicitly restrict to positive definite $K$ when taking its inverse.
\begin{lem}[Optimal weighted inverse]
\label{lem:opt} For every real symmetric positive definite matrix
$G$,
\begin{equation}
\min_{\substack{K\in\mathcal{K}_{d}\\
K>0
}
}\tr(GK^{-1})=\bigl(\tr\sqrt{G}\bigr)^{2}.\label{eq:weighted_matrix_minimum}
\end{equation}
The unique minimizer is
\begin{equation}
K_{*}:=\frac{\sqrt{G}}{\tr\sqrt{G}}.\label{eq:opt_k}
\end{equation}
\end{lem}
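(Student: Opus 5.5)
The plan is to prove a matrix Cauchy--Schwarz inequality for any feasible $K$, check that $K_*$ attains it, and then read off uniqueness from the equality cases. Set $S:=\sqrt{G}>0$. For any $K\in\mathcal{K}_{d}$ with $K>0$, I will consider the Hilbert--Schmidt inner product on real $d\times d$ matrices and write
\[
\tr S=\tr\bigl((SK^{-1/2})(K^{1/2})\bigr).
\]
Cauchy--Schwarz then gives
\[
(\tr S)^{2}\leq\tr\bigl(SK^{-1/2}K^{-1/2}S\bigr)\,\tr K=\tr\bigl(GK^{-1}\bigr)\tr K\leq\tr\bigl(GK^{-1}\bigr),
\]
where the last step uses $\tr K\leq1$ together with $\tr(GK^{-1})>0$. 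Hence every feasible $K$ satisfies $\tr(GK^{-1})\geq(\tr\sqrt{G})^{2}$.

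Next I will check attainment. The matrix $K_{*}=S/\tr S$ is symmetric and positive definite, and it has trace $1$, so it belongs to $\mathcal{K}_{d}$ with $K_{*}>0$. A direct computation gives
\[
\tr(GK_{*}^{-1})=\tr S\cdot\tr(S^{2}S^{-1})=(\tr S)^{2}.
\]
So the infimum is attained and is a genuine minimum.

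For uniqueness, suppose $\tr(GK^{-1})=(\tr S)^{2}$. Then both inequalities in the chain above must be equalities. Equality in the second one forces $\tr K=1$, since $\tr(GK^{-1})>0$. Equality in Cauchy--Schwarz forces $K^{-1/2}S=cK^{1/2}$ for some scalar $c$, because $K^{1/2}\neq0$. Multiplying on the left by $K^{1/2}$ gives $S=cK$, and $c>0$ follows from $S>0$ and $K>0$. Taking traces yields $c=\tr S$, so $K=K_{*}$.

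The main care point is this equality case. One must note that the Cauchy--Schwarz step uses the real Frobenius inner product. One must also confirm that the matrices $SK^{-1/2}$ and $K^{1/2}$ are being paired via $\tr(X^{\mathsf T}Y)$: with $X=K^{-1/2}S$ we have $X^{\mathsf T}=SK^{-1/2}$, which gives exactly $\tr(SK^{-1/2}K^{1/2})=\tr S$ and $\tr(X^{\mathsf T}X)=\tr(SK^{-1}S)=\tr(GK^{-1})$. Apart from this bookkeeping the argument is routine.
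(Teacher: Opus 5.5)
Your proof is correct and follows the paper's argument: Hilbert--Schmidt Cauchy--Schwarz on $K^{1/2}$ and $K^{-1/2}\sqrt{G}$, then $\tr K\leq1$, with uniqueness read off from the two equality conditions $\sqrt{G}=cK$ and $\tr K=1$. Your direct check that $\tr(GK_{*}^{-1})=(\tr\sqrt{G})^{2}$ and your care with the equality cases are small additions that the paper leaves implicit.
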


\begin{proof}
Apply the Hilbert--Schmidt Cauchy--Schwarz inequality to $K^{1/2}$
and $K^{-1/2}G^{1/2}$. It gives
\begin{equation}
\bigl(\tr\sqrt{G}\bigr)^{2}\leq\tr K\,\tr(GK^{-1})\leq\tr(GK^{-1}).\label{eq:weighted_matrix_schwarz}
\end{equation}
Equality in the first inequality holds exactly when $K^{-1/2}G^{1/2}=cK^{1/2}$
for some $c>0$, or equivalently $G^{1/2}=cK$. Equality in the second
requires $\tr K=1$. Together these conditions give Eq. (\ref{eq:opt_k}),
which is positive definite and belongs to $\mathcal{K}_{d}$.
\end{proof}

\subsection{\label{subsec:main_attainability}Completion of the proof and attainability}

\begin{proof}[Proof of Theorem \ref{thm:main}]
 Part (i) has been proved in Lemma \ref{lem:fisher_set}. For part
(ii), we first justify the use of inverse Fisher information for a
locally unbiased estimator. For an arbitrary POVM $M$, let $p_{x}=p_{\theta_{0}}(x;M)$
and define the score vector on positive-probability outcomes by
\begin{equation}
\ell(x)_{i}:=\frac{\Tr[(\partial_{i}\rho_{\theta_{0}})M_{x}]}{p_{x}},\qquad p_{x}>0,\qquad i=1,\ldots,d.\label{eq:score_vector}
\end{equation}
Set $\ell(x)=0$ on zero-probability outcomes and write $\mathbb{E}_{\theta_{0}}$
for expectation under $p_{\theta_{0}}(\cdot;M)$. Because $\Theta$
is open and the model is smooth, $p_{x}=0$ implies $\partial_{i}p_{\theta_{0}}(x;M)=0$
for every $i$. Consequently,
\begin{equation}
\mathbb{E}_{\theta_{0}}[\ell]=0,\qquad\mathbb{E}_{\theta_{0}}[\ell\ell^{\mathsf{T}}]=F_{\theta_{0}}[M].\label{eq:score_moments}
\end{equation}

Suppose $(M,\hat{\theta})$ is locally unbiased at $\theta_{0}$,
and set $z(x)=\hat{\theta}(x)-\theta_{0}$. The local unbiasedness
conditions imply
\begin{equation}
\mathbb{E}_{\theta_{0}}[z]=0,\qquad\mathbb{E}_{\theta_{0}}[z\ell^{\mathsf{T}}]=I_{d}.\label{eq:unbiased_score_identity}
\end{equation}
Write $F=F_{\theta_{0}}[M]$ and $V=V_{\theta_{0}}[M,\hat{\theta}]$.
If $a\in\ker F$, then $\ell(x)^{\mathsf{T}}a=0$ whenever $p_{x}>0$.
Equation (\ref{eq:unbiased_score_identity}) therefore gives $a=0$.
Thus $F>0$, and
\begin{equation}
V-F^{-1}=\mathbb{E}_{\theta_{0}}\bigl[(z-F^{-1}\ell)(z-F^{-1}\ell)^{\mathsf{T}}\bigr]\succeq0.\label{eq:classical_covariance_residual}
\end{equation}

Now set
\[
G:=J^{-1/2}WJ^{-1/2},\qquad K:=\widehat{F}[M].
\]
By part (i), $K\in\mathcal{K}_{d}$, and $F>0$ implies $K>0$. Equation
(\ref{eq:classical_covariance_residual}) and Lemma \ref{lem:opt}
yield
\begin{align}
\tr(WV) & \geq\tr(WF^{-1})=\tr(GK^{-1})\nonumber \\
 & \geq\bigl(\tr\sqrt{G}\bigr)^{2}=c^{(SF)}_{\theta_{0},W}.\label{eq:main_bound_proved}
\end{align}

To attain this bound, write
\[
R:=\sqrt{G}=U\,{\rm diag}(s_{1},\ldots,s_{d})\,U^{\mathsf{T}},\qquad U\in O(d),\qquad s_{i}>0,
\]
and let $u_{i}$ be the $i$th column of $U$. The corresponding normalized
SLD direction is
\[
L(u_{i})=\sum^{d}_{a=1}U_{ai}\widetilde{L}_{a}=\sum^{d}_{k=1}(U^{\mathsf{T}}J^{-1/2})_{ik}L_{k}=\hat{L}^{i},
\]
which is precisely the operator in Eq. (\ref{eq:optimal_sld_directions}).
Its spectral measurement $M^{(i)}=M^{(u_{i})}$ satisfies $\widehat{F}[M^{(i)}]=u_{i}u^{\mathsf{T}}_{i}$
by Lemma \ref{lem:one_para}. Choose the randomization probabilities
$p_{i}=s_{i}/\sum^{d}_{j=1}s_{j}$ as in Eq. (\ref{eq:optimal_randomized_measurement}).
The resulting POVM $M_{*}=\bigoplus^{d}_{i=1}p_{i}M^{(i)}$, with
the measurement label recorded, has
\begin{equation}
\widehat{F}[M_{*}]=\sum^{d}_{i=1}p_{i}u_{i}u^{\mathsf{T}}_{i}=\frac{R}{\tr R}=K_{*}.\label{eq:optimalF}
\end{equation}
In particular, $F_{*}:=F_{\theta_{0}}[M_{*}]>0$.

Finally, let $\ell_{*}(x)$ be the score of $M_{*}$ at $\theta_{0}$
and choose
\begin{equation}
\hat{\theta}_{*}(x)=\theta_{0}+F^{-1}_{*}\ell_{*}(x)\quad\text{if }p_{\theta_{0}}(x;M_{*})>0.\label{eq:attaining_score_estimator}
\end{equation}
Assign arbitrary finite values on zero-probability outcomes. Equation
(\ref{eq:score_moments}) shows that this estimator is locally unbiased
and has covariance $F^{-1}_{*}$. It is the estimator stated in Eq.
(\ref{eq:optimal_classic}). Since $\widehat{F}[M_{*}]=K_{*}$, equality
holds in Eq. (\ref{eq:main_bound_proved}). This proves part (ii)
and completes the proof.
\end{proof}

\section{\label{sec:simulation}Adaptive Estimation for the $\Gamma^{5}$
Model}

We illustrate the attainable bound with a numerical study of a five-parameter
spin-factor model on $\mathbb{C}^{4}$. The optimal measurement depends
on the unknown state, so we use a blockwise adaptive scheme: the measurement
is selected using the current estimate and updated after each block
of observations. The purpose is to compare its finite-sample performance
with the local benchmark established in Theorem \ref{thm:main}.

\subsection{\label{subsec:gamma5_model}The model and its SLD Fisher information}

Consider
\begin{equation}
\rho_{\theta}=\frac{1}{4}\bigl(I_{4}+\gamma(\theta)\bigr),\qquad\gamma(\theta)=\sum^{5}_{j=1}\theta^{j}\gamma_{j},\qquad\theta\in\Theta:=\{\theta\in\mathbb{R}^{5}\mid\|\theta\|<1\},\label{eq:gamma5_model}
\end{equation}
where
\begin{align}
\gamma_{1} & =\sigma_{1}\otimes\sigma_{1}, & \gamma_{2} & =\sigma_{1}\otimes\sigma_{2}, & \gamma_{3} & =\sigma_{1}\otimes\sigma_{3},\nonumber \\
\gamma_{4} & =\sigma_{3}\otimes I_{2}, & \gamma_{5} & =\sigma_{2}\otimes I_{2}.\label{eq:gamma5_generators}
\end{align}
Here $I_{k}$ denotes the identity matrix of size $k$, and the $\sigma_{j}$
are the Pauli matrices. The true parameter used in the simulations
is
\begin{equation}
\theta_{0}=\frac{1}{10}(1,2,3,4,5)^{\mathsf{T}},\qquad\|\theta_{0}\|^{2}=0.55.\label{eq:simulation_true_parameter}
\end{equation}
All states in $\Theta$ are faithful. We also use the continuous extension
of $\rho_{\theta}$ to the closed ball $\overline{\Theta}$ when performing
constrained maximum-likelihood estimation.

For $\theta\in\Theta$, the SLDs and their Fisher information are
\begin{align}
L_{\theta,i} & =\gamma_{i}+\frac{\theta^{i}}{1-\|\theta\|^{2}}\bigl(\gamma(\theta)-I_{4}\bigr),\label{eq:gamma5_slds}\\
J_{\theta} & =I_{5}+\frac{\theta\theta^{\mathsf{T}}}{1-\|\theta\|^{2}},\qquad J^{-1}_{\theta}=I_{5}-\theta\theta^{\mathsf{T}}.\label{eq:gamma5_sld_information}
\end{align}
These expressions follow by substitution into the SLD equations, using
the defining relations of the generators. Choosing the weight $W_{\theta}=J_{\theta}$
in Theorem \ref{thm:main} gives
\begin{equation}
c^{(SF)}_{\theta,J_{\theta}}=\left(\tr\sqrt{J^{-1/2}_{\theta}J_{\theta}J^{-1/2}_{\theta}}\right)^{2}=25.\label{eq:gamma5_cost}
\end{equation}

\subsection{\label{subsec:gamma5_measurement}An explicit optimal measurement}

For this weight, the randomization probabilities in Eq. (\ref{eq:optimal_randomized_measurement})
are all $1/5$. An explicit realization uses one radial direction
and four tangential directions. Given a nonzero design point $\vartheta\in\overline{\Theta}$,
choose an orthonormal basis $n_{1}(\vartheta),\ldots,n_{5}(\vartheta)$
of $\mathbb{R}^{5}$ with
\[
n_{1}(\vartheta)=\frac{\vartheta}{\|\vartheta\|}.
\]
At $\vartheta=0$, any orthonormal basis can be used. For each $i$,
define the binary projective measurement
\begin{equation}
P_{i,\pm}(\vartheta)=\frac{1}{2}\bigl(I_{4}\pm\gamma(n_{i}(\vartheta))\bigr).\label{eq:gamma5_binary_measurement}
\end{equation}
Each of these projections has rank two. Selecting $i$ uniformly and
retaining both $i$ and the binary outcome gives the ten-outcome POVM
\begin{equation}
M_{i,\pm}(\vartheta)=\frac{1}{5}P_{i,\pm}(\vartheta),\qquad i=1,\ldots,5.\label{eq:gamma5_randomized_povm}
\end{equation}
The POVM effects $M_{i,\pm}$ are scaled projections, not projections
themselves. For a state with parameter $\theta$ and a fixed design
$\vartheta$, the outcome probabilities are
\begin{equation}
p_{\theta}(i,\pm\mid\vartheta)=\Tr\bigl[\rho_{\theta}M_{i,\pm}(\vartheta)\bigr]=\frac{1}{10}\bigl(1\pm\theta^{\mathsf{T}}n_{i}(\vartheta)\bigr).\label{eq:gamma5_outcome_probabilities}
\end{equation}
For interior $\theta$, direct differentiation yields
\begin{equation}
F_{\theta}[M(\vartheta)]=\frac{1}{5}\sum^{5}_{i=1}\frac{n_{i}(\vartheta)n_{i}(\vartheta)^{\mathsf{T}}}{1-\bigl(\theta^{\mathsf{T}}n_{i}(\vartheta)\bigr)^{2}}.\label{eq:gamma5_design_information}
\end{equation}
At $\vartheta=\theta\ne0$, the radial term has denominator $1-\|\theta\|^{2}$
and the four tangential terms have denominator one. Therefore, including
$\theta=0$,
\begin{equation}
F_{\theta}[M(\theta)]=\frac{1}{5}J_{\theta},\qquad\tr\bigl(J_{\theta}F_{\theta}[M(\theta)]^{-1}\bigr)=25.\label{eq:gamma5_measurement_optimality}
\end{equation}
For $n$ repetitions of this fixed, locally optimal measurement, the
inverse Fisher information is $5J^{-1}_{\theta}/n$. Thus $25$ is
the corresponding sample-size-scaled local benchmark.

The measurement in Eq. (\ref{eq:gamma5_randomized_povm}) is also
well-defined when the design point lies on the boundary of the ball:
it uses only an orthonormal frame and does not require evaluating
$J_{\vartheta}$ there. The optimality statement in Eq. (\ref{eq:gamma5_measurement_optimality}),
however, is made only for interior parameters.

\subsection{\label{subsec:blockwise_adaptive}Blockwise adaptive maximum-likelihood
estimation}

The optimal measurement $M(\theta_{0})$ obtained from Theorem \ref{thm:main}
depends on the unknown true parameter $\theta_{0}$. To implement
this construction without prior knowledge of $\theta_{0}$, we adopt
an adaptive estimation scheme \cite{adaptive_nagaoka,adaptive_fujiwara}.
In this scheme, the true parameter is replaced by the current maximum-likelihood
estimate, and the corresponding measurement is used for subsequent
observations. Fujiwara \cite{adaptive_fujiwara} established the strong
consistency and asymptotic efficiency of adaptive maximum-likelihood
estimation under suitable regularity conditions.

In our simulations, the measurement is updated after each block of
$b=500$ observations, rather than after every observation, to reduce
the computational cost. The initial design is the canonical frame,
corresponding to the seed $\hat{\theta}^{(0)}=(1,0,0,0,0)^{\mathsf{T}}$.
After $kb$ observations, the estimate $\hat{\theta}^{(kb)}$ is used
to select the measurement for the next block,
\[
kb+1,\ldots,(k+1)b.
\]
Thus the outcome at time $t$ is obtained with design
\begin{equation}
\vartheta_{t}=\hat{\theta}^{(b\lfloor(t-1)/b\rfloor)}.\label{eq:adaptive_design_index}
\end{equation}
Writing the observed outcome as $X_{t}=(i_{t},\varepsilon_{t})$,
with $\varepsilon_{t}\in\{-1,+1\}$, the cumulative log likelihood
is
\begin{equation}
\ell_{n}(\theta)=\sum^{n}_{t=1}\log\left[\frac{1+\varepsilon_{t}\theta^{\mathsf{T}}n_{i_{t}}(\vartheta_{t})}{10}\right].\label{eq:adaptive_log_likelihood}
\end{equation}
We use the convention $\log0=-\infty$ at boundary candidate states.
All realized design settings are held fixed in this likelihood; they
are functions of the observed past, not of the candidate parameter
being optimized. At the end of each block, we compute a constrained
maximum-likelihood estimate
\begin{equation}
\hat{\theta}^{(kb)}\in\operatorname*{arg\,max}_{\theta\in\overline{\Theta}}\ell_{kb}(\theta),\qquad k=1,2,\ldots.\label{eq:adaptive_mle}
\end{equation}
The maximization uses all preceding observations, not just the most
recent block. The closed-ball constraint allows boundary maximizers,
while the true parameter in Eq. (\ref{eq:simulation_true_parameter})
is strictly interior. 

Each run consists of ten blocks, for a total of $n=5000$ copies.
Figure \ref{fig:each} shows the five parameter estimates for ten
runs, recorded at $n=500,1000,\ldots,5000$. The trajectories become
concentrated near the true parameter values as the sample size increases.

\begin{figure}
\centering{}\includegraphics[scale=0.5]{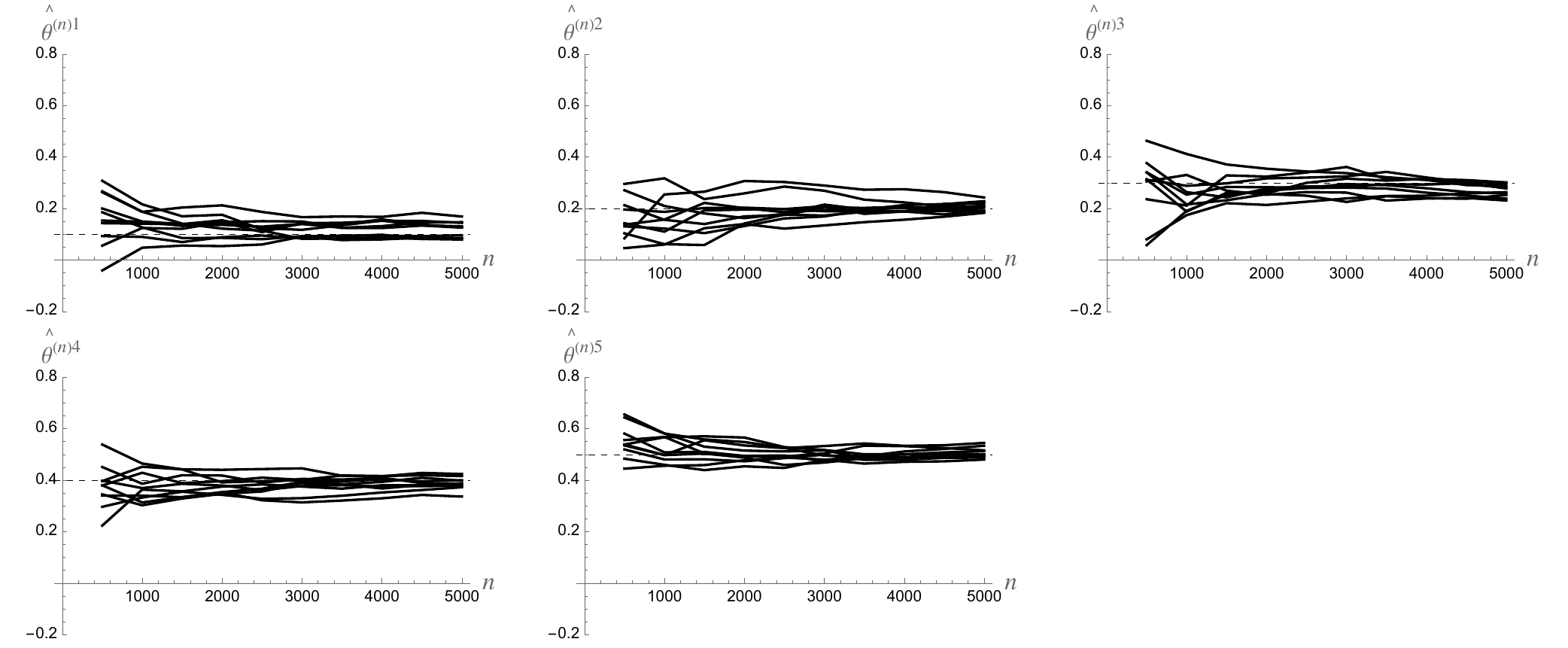}
\caption{Blockwise adaptive estimation for the $\Gamma^{5}$ model. Each panel
shows one coordinate of $\hat{\theta}^{(n)}$ for ten runs with $\theta_{0}=(1,2,3,4,5)^{\mathsf{T}}/10$.
Estimates and measurement settings are updated after every $500$
observations, up to $n=5000$. Dashed horizontal lines indicate the
corresponding true values. Lines between recorded estimates are guides
to the eye.\label{fig:each}}
\end{figure}

\subsection{\label{subsec:bures_performance}Bures loss and numerical performance}

To quantify the estimation error, we use the squared Bures distance
as the loss
\begin{equation}
B(\theta,\theta'):=2\left(1-\Tr\sqrt{\rho^{1/2}_{\theta}\rho_{\theta'}\rho^{1/2}_{\theta}}\right).\label{eq:bures_loss}
\end{equation}
For the present spin-factor model, the root fidelity has the explicit
form
\begin{equation}
\Tr\sqrt{\rho^{1/2}_{\theta}\rho_{\theta'}\rho^{1/2}_{\theta}}=\sqrt{\frac{1+\theta^{\mathsf{T}}\theta'+\sqrt{(1-\|\theta\|^{2})(1-\|\theta'\|^{2})}}{2}}.\label{eq:gamma5_root_fidelity}
\end{equation}
This follows from the two-eigenvalue structure of the spin factor
and extends continuously to $\overline{\Theta}$. Expanding about
an interior parameter gives
\begin{equation}
4B(\theta,\theta+\delta)=\delta^{\mathsf{T}}J_{\theta}\delta+O(\|\delta\|^{3}).\label{eq:bures_local_expansion}
\end{equation}
Thus, in the local regime, $4B$ measures the squared estimation error
weighted by the SLD Fisher information.

The left panel of Fig. \ref{fig:bures} shows $B(\theta_{0},\hat{\theta}^{(n)})$
for the same ten runs as Fig. \ref{fig:each}. To estimate its mean,
a separate set of $N_{\mathrm{MC}}=1000$ runs was performed with
the same true parameter and block size. At each recorded sample size,
define
\begin{equation}
\overline{B}_{n}:=\frac{1}{N_{\mathrm{MC}}}\sum^{N_{\mathrm{MC}}}_{r=1}B\bigl(\theta_{0},\hat{\theta}^{(n)}_{r}\bigr),\label{eq:mean_bures_loss}
\end{equation}
where $r$ labels the run. The right panel of Fig. \ref{fig:bures}
plots $4n\overline{B}_{n}$ together with the benchmark $25$.

The distinction between mean squared error and covariance is important
at finite sample sizes. If $\delta_{n}=\hat{\theta}^{(n)}-\theta_{0}$,
$b_{n}=\mathbb{E}_{\theta_{0}}[\hat{\theta}^{(n)}]-\theta_{0}$, and
$C_{n}=\operatorname{Cov}_{\theta_{0}}(\hat{\theta}^{(n)})$, then
\begin{equation}
\mathbb{E}_{\theta_{0}}[\delta^{\mathsf{T}}_{n}J_{\theta_{0}}\delta_{n}]=\operatorname{tr}(J_{\theta_{0}}C_{n})+b^{\mathsf{T}}_{n}J_{\theta_{0}}b_{n}.\label{eq:weighted_mse_decomposition}
\end{equation}
Accordingly, $4n\overline{B}_{n}$ is a Monte Carlo estimate of the
scaled mean Bures loss, not a direct estimate of the scaled covariance.
Its comparison with $25$ is motivated by Eq. (\ref{eq:bures_local_expansion})
and the inverse-information benchmark in Eq. (\ref{eq:gamma5_measurement_optimality}).

At larger sample sizes, the scaled mean loss $4n\overline{B}_{n}$
remains close to the theoretical benchmark of $25$. These results
illustrate performance close to the local benchmark in a five-parameter
model beyond the qubit case.

\begin{figure}
\begin{centering}
\includegraphics[scale=0.5]{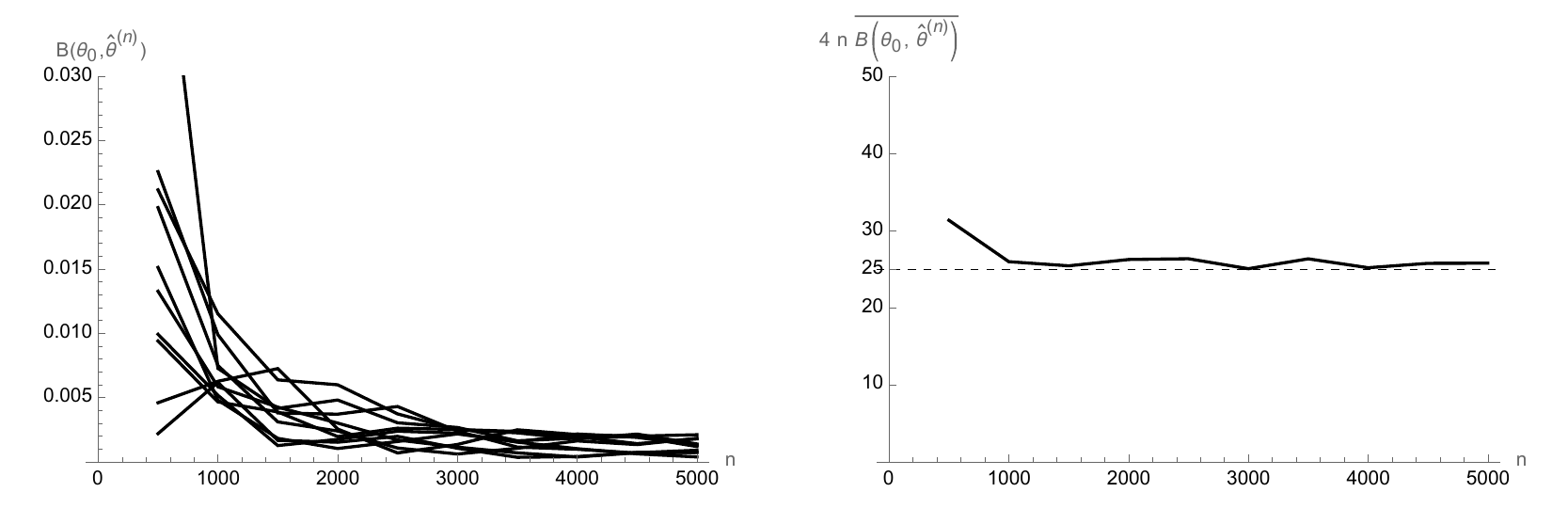}
\par\end{centering}
\caption{Bures-loss performance of the blockwise adaptive scheme. Left: the
squared Bures loss $B(\theta_{0},\hat{\theta}^{(n)})$, defined in
Eq. (\ref{eq:bures_loss}), for the ten runs in Fig. \ref{fig:each}.
Right: $4n\overline{B}_{n}$, evaluated using a separate set of $1000$
runs. The dashed line marks the local benchmark $25$. Both panels
use block size $500$ and total sample size $5000$.\label{fig:bures}}
\end{figure}

\section{\label{sec:conclusion}Discussion and Conclusion}

We have characterized the complete single-copy Fisher-information
region for smooth quantum statistical models contained in spin factors.
After normalization by the SLD Fisher information, the attainable
region is exactly $\mathcal{K}_{d}$, the set of real symmetric positive
semidefinite matrices with trace at most one. This characterization
yields the sharp weighted covariance bound in Theorem \ref{thm:main}
for every positive definite weight, together with an explicit attaining
measurement obtained by randomizing spectral measurements of normalized
SLD directions. The result applies to all finite-outcome POVMs on
the ambient Hilbert space and does not require the state at the estimation
point to be faithful, provided that the SLD Fisher information is
positive definite.

The information tradeoff is determined by the spin-factor structure,
rather than by the dimension of its matrix representation. The positive
projection onto the spin factor reduces arbitrary measurements to
spin-factor-valued POVMs without changing their outcome probabilities.
The spectral structure of these effects then gives the dimension-independent
constraint $\tr(J^{-1}_{\theta_{0}}F_{\theta_{0}}[M])\leq1$. For
the Pauli representation of $\Gamma^{3}$, this reproduces the qubit
tradeoff; for $q>3$, it also covers models with more than three independent
parameters on higher-dimensional Hilbert spaces. In particular, the
ambient-dimension factor appearing in the general Gill--Massar bound
is absent from the sharp spin-factor bound. For the SLD weight $W=J_{\theta_{0}}$,
the optimal Fisher information is $J_{\theta_{0}}/d$ and the minimum
weighted covariance cost is $d^{2}$.

The optimal measurement depends on the unknown true parameter, which
naturally leads to an adaptive implementation in which the measurement
is selected using the current estimate. Fujiwara's results on strong
consistency and asymptotic efficiency under suitable regularity conditions
provide the theoretical foundation for adaptive quantum estimation
\cite{adaptive_fujiwara}. Section \ref{sec:simulation} applies this
approach to the five-parameter $\Gamma^{5}$ model. At larger sample
sizes, the scaled mean Bures loss $4n\overline{B}_{n}$ remains close
to the theoretical benchmark of $25$, illustrating the use of the
optimal local measurement in estimation beyond the qubit setting.

The proof also gives an extension to models that admit a locally sufficient
spin-factor algebra, even when their density operators do not themselves
belong to that algebra. In the framework of Jordan-algebraic sufficiency
\cite{yamagata-jordan}, the relevant condition can be expressed as
follows. Consider a smooth quantum statistical model with $J_{\theta_{0}}>0$,
and suppose that there is a positive unital real-linear map
\[
\alpha:\mathcal{B}_{h}(\mathcal{H})\longrightarrow\mathcal{A},\qquad\mathcal{A}=\Gamma^{q}\otimes I_{m}\subset\mathcal{B}_{h}(\mathcal{H}),
\]
fixed at $\theta_{0}$, such that
\begin{align}
\Tr\bigl[\rho_{\theta_{0}}\alpha(B)\bigr] & =\Tr(\rho_{\theta_{0}}B),\label{eq:sufficient1}\\
\Tr\bigl[(\partial_{i}\rho_{\theta_{0}})\alpha(B)\bigr] & =\Tr\bigl[(\partial_{i}\rho_{\theta_{0}})B\bigr],\qquad i=1,\ldots,d,\label{eq:sufficient2}
\end{align}
for every Hermitian operator $B$ on $\mathcal{H}$. Here $\Gamma^{q}$
has the representation specified in Eqs. (\ref{eq:def_spin_fac})
and (\ref{eq:def_spin_fac0}), and $I_{m}$ is the identity on a multiplicity
space. For every POVM $M$, the effects $\alpha(M_{x})$ form a POVM
with the same probabilities and first derivatives at $\theta_{0}$,
and hence the same Fisher information.

Every $\mathcal{A}$-valued POVM satisfies the same norm bound (\ref{eq:POVM_ineq}),
since tensoring with $I_{m}$ leaves operator norms unchanged. The
Bessel-inequality argument in Section \ref{sec:proof_main}, applied
with the SLDs and Fisher information of the original model, therefore
gives the same normalized trace bound. Spectral measurements of the
original normalized SLD directions realize the rank-one Fisher matrices
$vv^{\mathsf{T}}$ in normalized parameter coordinates, with $\|v\|=1$,
and randomization yields all of $\mathcal{K}_{d}$. Thus both the
exact Fisher-information region and the attainable weighted covariance
bound extend to this locally sufficient setting. The randomized spectral
measurements are constructed on the original Hilbert space; their
images under $\alpha$ are locally equivalent POVMs in $\mathcal{A}$.

A natural next question is how the attainable information region depends
on more general locally sufficient matrix Jordan algebras. The decomposition
recalled in Appendix \ref{sec:real_jordan} suggests studying the
other simple components and the combination of information within
and between direct-sum components. The spin-factor case provides an
exactly solvable starting point for this analysis. It links the Jordan
structure of a statistical model to its attainable measurement precision
and supplies explicit optimal measurements for models beyond qubits.

\section*{Author Contributions}

The author conceived the study, derived the theoretical results, performed
the numerical simulations, and wrote the manuscript. ChatGPT (OpenAI)
was used for English-language editing. The author reviewed all AI-assisted
text and takes full responsibility for the content. 

\section*{ACKNOWLEDGMENTS}

This work was supported by JSPS KAKENHI Grant Numbers JP23H01090 and
JP22K03466 and JST ERATO Grant Number JPMJER2402, Japan.

\appendix

\section{Matrix Jordan Algebras\label{sec:real_jordan}}

We recall the finite-dimensional structure underlying the discussion
in Sec. \ref{sec:conclusion}. Let $\mathcal{A}\subset\mathcal{B}_{h}(\mathcal{H})$
be a unital real Jordan subalgebra, with product $A\circ B=(AB+BA)/2$.
The Hilbert--Schmidt inner product is positive definite on $\mathcal{A}$
and satisfies
\[
\langle A\circ B,C\rangle_{\mathrm{HS}}=\langle A,B\circ C\rangle_{\mathrm{HS}}.
\]
Thus $\mathcal{A}$ is a Euclidean Jordan algebra. In particular,
it is formally real: if $\sum_{j}A^{2}_{j}=0$ for $A_{j}\in\mathcal{A}$,
then taking the trace gives $\sum_{j}\Tr(A^{2}_{j})=0$, so every
$A_{j}$ vanishes.

The Jordan--von Neumann--Wigner classification implies that $\mathcal{A}$
decomposes, as a real Jordan algebra, into simple ideals \cite{jordan1,jordan3,jordan2}:
\begin{equation}
\mathcal{A}\cong\bigoplus^{s}_{\alpha=1}\mathcal{J}_{\alpha}.\label{eq:jordan_deco}
\end{equation}
For a matrix Jordan algebra, the simple factors can be chosen from
the following nonredundant list:
\begin{equation}
\begin{gathered}\mathbb{R},\\
\mathbb{M}_{n}(\mathbb{R})_{h},\quad\mathbb{M}_{n}(\mathbb{C})_{h},\quad\mathbb{M}_{n}(\mathbb{H})_{h},\qquad n\geq3,\\
\Gamma^{q},\qquad q\geq2.
\end{gathered}
\label{eq:simple_matrix_jordan_factors}
\end{equation}
Here $\mathbb{H}$ denotes the quaternions, and $\mathbb{M}_{n}(\mathbb{F})_{h}$
is the real Jordan algebra of self-adjoint $n\times n$ matrices over
$\mathbb{F}$; for $\mathbb{F}=\mathbb{R}$, these are real symmetric
matrices. The exceptional Albert algebra of $3\times3$ Hermitian
octonionic matrices, which occurs in the abstract Euclidean classification,
is excluded because it has no faithful Jordan representation by complex
Hermitian matrices.

In the notation of this paper, $\Gamma^{q}$ has real dimension $q+1$.
The low-dimensional identifications are
\begin{equation}
\begin{aligned}\mathbb{M}_{2}(\mathbb{R})_{h} & \cong\Gamma^{2},\\
\mathbb{M}_{2}(\mathbb{C})_{h} & \cong\Gamma^{3},\\
\mathbb{M}_{2}(\mathbb{H})_{h} & \cong\Gamma^{5}.
\end{aligned}
\label{eq:low_dimensional_jordan_identifications}
\end{equation}
Also, $\mathbb{M}_{1}(\mathbb{F})_{h}\cong\mathbb{R}$ for $\mathbb{F}=\mathbb{R},\mathbb{C},\mathbb{H}$,
whereas $\Gamma^{1}\cong\mathbb{R}\oplus\mathbb{R}$ is not simple.
These identities explain the index ranges in Eq. (\ref{eq:simple_matrix_jordan_factors}).
The quaternionic matrix algebra has a standard faithful representation
by $2n\times2n$ complex matrices, whose restriction to the self-adjoint
part preserves the Jordan product. In particular, the $\Gamma^{5}$
model in Sec. \ref{sec:simulation} realizes the quaternionic $2\times2$
Hermitian Jordan algebra inside $\mathcal{B}_{h}(\mathbb{C}^{4})$. 

The abstract decomposition (\ref{eq:jordan_deco}) should be distinguished
from the choice of a concrete matrix representation \cite{jordan3,barnum_graydon_wilce}.
The units of the simple ideals correspond to mutually orthogonal projections
$e_{\alpha}\in\mathcal{A}$ with $\sum_{\alpha}e_{\alpha}=I$. They
give a Hilbert-space decomposition $\mathcal{H}=\bigoplus_{\alpha}\mathcal{H}_{\alpha}$,
where $\mathcal{H}_{\alpha}=e_{\alpha}\mathcal{H}$. After a unitary
change of basis, the algebra takes the form
\begin{equation}
U\mathcal{A}U^{*}=\left\{ \bigoplus^{s}_{\alpha=1}\pi_{\alpha}(a_{\alpha})\,\middle|\,a_{\alpha}\in\mathcal{J}_{\alpha}\right\} ,\label{eq:jordan_represented_decomposition}
\end{equation}
where each $\pi_{\alpha}:\mathcal{J}_{\alpha}\to\mathcal{B}_{h}(\mathcal{H}_{\alpha})$
is a faithful unital Jordan representation. The representations $\pi_{\alpha}$
retain the embedding and multiplicity information that is not specified
by the abstract simple factors alone.

For a fixed matrix representation $\mathcal{J}$, repeated copies
of the same representation can be written as
\[
\mathcal{J}\otimes I_{m}:=\{A\otimes I_{m}\mid A\in\mathcal{J}\},
\]
where $I_{m}$ is the identity on the multiplicity space $\mathbb{C}^{m}$.
This is the meaning of $\Gamma^{q}\otimes I_{m}$ in Sec. \ref{sec:conclusion}.
For spin factors, the amplified generators $\gamma_{j}\otimes I_{m}$
satisfy the same defining relations, and
\[
\|A\otimes I_{m}\|=\|A\|,\qquad\Tr(A\otimes I_{m})=m\,\Tr A.
\]
Consequently, the operator-norm bound for POVMs in Lemma \ref{lem:spin_povm_norm}
is unchanged by this multiplicity. The positive-projection lemma,
Lemma \ref{lem:proj-J}, applies to every unital matrix Jordan algebra,
while the two-eigenvalue structure of a spin factor supplies the additional
ingredient used to obtain the exact information region in Theorem
\ref{thm:main}.

\end{document}